\newcommand{\mc}{\mathcal}
\newcommand{\bm}{\boldsymbol}
\newcommand{\bth}{\boldsymbol{\theta}}
\newcommand{\bphi}{\boldsymbol{\phi}}
\newcommand{\pr}{\text{Pr}}
\newcommand{\qedsymbol}{$\blacksquare$}
\newenvironment{proof}
    {
      \emph{Proof.}
    }
    {
      \hfill\qedsymbol
    }
\newtheorem{Lemma}{Lemma}
\newtheorem{proposition}{Proposition}
\newtheorem{Corollary}{Corollary}
\newtheorem{Remark}{Remark}
\begin{document}
\title{Caching Transient Content for IoT Sensing: Multi-Agent Soft Actor-Critic}

\author{
Xiongwei~Wu, {\it Student Member, IEEE}, 
Xiuhua Li, {\it Member, IEEE}, \\Jun Li, {\it Senior Member, IEEE}, P. C. Ching, {\it Fellow, IEEE}, \\ Victor C. M. Leung, {\it  Fellow, IEEE}, H. Vincent Poor, {\it Fellow, IEEE}

% \thanks{This work was supported in part by the U.S. National Science Foundation under Grant CCF-1908308  and in part by a Princeton Schmidt Data-X Research Award.}
\thanks{Part of this work has been submitted at the IEEE Globecom, 2020.}
\thanks{X. Wu and P. C. Ching are with the Department of Electronic Engineering, The Chinese University of Hong Kong, Shatin, Hong Kong SAR of China  (e-mail: xwwu@ee.cuhk.edu.hk; pcching@ee.cuhk.edu.hk).}
\thanks{X. Li is with the School of Big Data \& Software Engineering, Chongqing University, chongqing, 401331  China, and also with the Institute of Intelligent Network and Edge Computing at Key Laboratory of Dependable Service Computing in Cyber Physical Society (Chongqing University), Ministry of Education, China (email: lixiuhua1988@gmail.com)}
\thanks{J. Li is with the School of Electronic and Optical Engineering, Nanjing University of Science and Technology, Nanjing 210094, China. (e-mail:  jun.li@njust.edu.cn).}
\thanks{V. C. M. Leung is with College of Computer Science \& Software Engineering, Shenzhen University, Shenzhen 518060, China, and also with the Department of Electrical and Computer Engineering, The University of British Columbia, Vancouver, BC V6T 1Z4, Canada (email: vleng@ieee.org).}
\thanks{H. V. Poor is 
with the Department of Electrical Engineering, Princeton University, Princeton, NJ 08544. (e-mail: poor@princeton.edu).}
}

\maketitle
\begin{abstract}

Edge nodes (ENs) in Internet of Things commonly serve as gateways to cache sensing data while providing accessing services for data consumers. This paper considers multiple ENs that cache sensing data under the coordination of the cloud. Particularly, each EN can fetch content generated by sensors within its coverage, which can be uploaded to the cloud via fronthaul and then be delivered to other ENs beyond the communication range. However, sensing data are usually transient with time whereas frequent cache updates could lead to considerable energy consumption at sensors and fronthaul traffic loads. Therefore, we adopt age of information to evaluate data freshness and investigate intelligent caching policies to preserve data freshness while reducing cache update costs. Specifically, we model the cache update problem as a cooperative multi-agent Markov decision process with the goal of minimizing the long-term average weighted cost. To efficiently handle the exponentially large number of actions, we devise a novel reinforcement learning approach, which is a discrete multi-agent variant of soft actor-critic (SAC). Furthermore, we generalize the proposed approach into a decentralized control, where each EN can make decisions based on local observations only. Simulation results demonstrate the superior performance of the proposed SAC-based caching schemes.
\end{abstract}

\begin{IEEEkeywords}
Internet of Things, age of information, cooperative multi-agent Markov decision process, soft actor-critic
\end{IEEEkeywords}

\section{Introduction}
With the advancement of wireless access technology, it is envisioned that billions of devices will access the Internet, forming the so called {\it Internet of Things (IoT)} \cite{madakam2015internet}. The advent of this paradigm generalizes the accessibility towards various kinds of IoT sensors (e.g., smart cameras and temperature sensors), and thus enables intelligent services to improve human quality of life \cite{he2017software,chen2019artificial}. However, countless electronic devices are anticipated to generate a sheer volume of traffic loads, which can possibly make wireless networks saturate and degrade the quality of service. To overcome these challenges, edge nodes (ENs), e.g., small-cell base stations, are expected to act as gateways to cache sensing data close to the consumers. Consequently, it can greatly reduce traffic loads, transmission delay, and energy cost in IoT sensing networks \cite{tran2017collaborative, wang2017multi}.

% appealing. 
% Nevertheless, in most IoT sensing applications, a critical concern is how fresh the data is. Unlike caching multimedia content that is usually {\it in-transient} 
% By introducing computing and caching capabilities at edge nodes (e.g., Fog-RAN \cite{peng2016fog}), sensing data are able to be cached close to IoT sensors and data consumers. 
% This practice significantly reduces demand on fronthaul and backhaul and decreases latency \cite{peng2016fog,dastjerdi2016fog}. 
Currently, some existing studies have been devoted to caching policies at wireless networks in terms of optimizing communication performance criteria, e.g., traffic loads, latency, and power consumption \cite{wu2020joint,wu2019jointMDS,li2018hierarchical,li2015distributed,li2016pricing}. These caching policies emphasize how to efficiently cache multimedia content given limited storage at ENs. However, IoT sensors usually generate sensing data at a relatively small size  \cite{xu2020aoi}. Therefore, each EN can be assumed to have enough storage to cache content items produced by all sensors in the network \cite{xu2020aoi}. In this way, each EN can locally satisfy user requests towards all content items.
% without repeated fronthaul transmissions. 
% It is worth mentioning that, an EN is likely to communicate with only a subset of sensors over the IoT sensing network because of the short communication ranges of sensors. Hence, each EN can first receive content items generated from its coverage, and then transfers them to other ENs through fronthaul links. This practice inevitably produces extra traffic loads on fronthaul. 
Moreover, in contrast with multimedia content that is often {\it in-transient}, sensing data cached at ENs gradually become outdated as time passes. The staleness of caching content may significantly deteriorate the performance of IoT sensing services. Indeed, how to preserve data freshness constitutes the primary challenge in designing caching policies for IoT sensing. 
A recently proposed performance criterion can be adopted to quantify data freshness, namely, {\it age of information} (AoI) \cite{kaul2012real}. The AoI of a content item is defined as the amount of time that has passed since the last measurement of this content item. Given the arrivals of user requests, cache update is needed to reduce the average AoI of caching content items \cite{kaul2012real}. Nevertheless, excessive cache updates will generate considerable energy consumption and challenge the battery life of sensors. Hence, these characteristics of IoT sensing require new and efficient caching polices in IoT sensing networks.

\subsection{Related Work}

AoI was initially investigated in \cite{kaul2012real} to evaluate status update for packet delivery between a source node and destination node. The author derived the average AoI by considering a simple queuing model. Such a source-destination scenario was further investigated in \cite{huang2015optimizing
} under a more complex queuing system, i.e., $M/G/1$. The study in \cite{costa2016age} also focused on this scenario and investigated average and peak AoI. The authors in \cite{sun2017update} studied the optimal policy for packet delivery from a source to a remote destination by considering an age penalty function.  
In general, these works were extensions of the study in \cite{kaul2012real}, which characterized the average or peak AoI based on different types of queuing models. Later on, AoI was adopted to evaluate the performance of IoT sensing networks, which involved multiple sensors and error-prone wireless links. The study in \cite{emara2020spatiotemporal} investigated the peak AoI under different scales of IoT networks by considering Bernoulli traffic. Taking into account sampling cost and update cost, the research in \cite{zhou2019joint} examined update policies under a single sensor scenario and multiple sensors scenario, respectively. Moreover, the studies in \cite{gong2018energy,gu2019timely} investigated the age-energy tradeoff, and characterized AoI and energy cost in closed form. This line of research generally attempted to derive analytical expressions of age-based performance metrics under queuing models, and focused on performance analysis by utilizing  optimization theory. 

Some recent works have investigated intelligent policies for cost-effective caching in IoT sensing networks by applying reinforcement learning (RL). The authors in \cite{ceran2018reinforcement} considered a single sensor and proposed to minimize the average AoI subject to the average number of updates. Treating the model parameters of neural networks as transient content, the study in \cite{ma2020deep} proposed to minimize the average AoI plus cost by using deep Q-network (DQN). Both studies evaluated cache update cost by counting the number of content transmissions. The study in \cite{ceran2019reinforcement} proposed to minimize the average AoI by considering sensing and transmission energy cost. The authors in \cite{hatami2020age} considered cache update at multiple sensors, and investigated the tradeoff between energy consumption and AoI via Q-learning. Similarly, the tradeoff between AoI and energy consumption was also investigated in studies\cite{xu2020aoi,abd2020reinforcement}. Moreover, the studies in \cite{zhu2018cachingTransient,yao2020caching} utilized actor-critic (AC)-based approaches to investigate how to cache transient content by considering content update cost. All of these studies focused on cost-effective update polices at a single EN, 
% which coordinates sensors within its coverage.
which aggregates sensing data generated from sensors at its coverage.  

\subsection{Contributions}
This paper investigates intelligent policies of cost-effective cache update in the IoT sensing network, where multiple ENs cache sensing data under the coordination of the cloud. Compared with prior studies on a single EN that only entails sensors at its coverage \cite{ceran2018reinforcement,ceran2019reinforcement,hatami2020age,abd2020reinforcement,xu2020aoi,zhu2018cachingTransient,yao2020caching}, this work investigates a more general scenario. Specifically, each EN is likely to communicate with a subset of sensors because of the short communication ranges at IoT sensors \cite{hassanalieragh2015health}. Thus, each EN needs to upload content items generated from its coverage to the cloud so that other ENs beyond the communication range can download these content items and provide accessing services for data consumers. Consequently, cache update at multiple ENs not only requires energy consumption at sensors but also leads to fronthaul traffic loads. Given limited battery levels at sensors and capacity of fronthaul links in reality, it is imperative to find cache update policies that preserve data freshness while reducing update costs. For this reason, we consider a more integrative performance metric, involving the average AoI, energy consumption and fronthaul traffic loads. In addition, the average AoI in a multiple ENs scenario is characterized by user requests received at all ENs. We need to take into account space-time dynamics of content popularity \cite{sadeghi2018optimal} in comparison to the studies \cite{ceran2018reinforcement,ceran2019reinforcement,hatami2020age,abd2020reinforcement,xu2020aoi}. 

The considered scenario results in a multi-agent discrete decision-making, where the space of the discrete decisions grows exponentially versus system parameters, e.g., the number of ENs. Some conventional RL algorithms used in prior studies, e.g., DQN, suffer from high sample complexity\footnote{Sample complexity of a RL algorithm usually refers to the number of training experiences that an agent needs to generate in order to achieve certain level of reward.} and brittleness to scalability, which are not efficient in handling the multi-agent tasks \cite{zhang2019multi}. 
We therefore devise a novel RL approach with an output size linearly increases w.r.t. these system parameters. The proposed approach utilizes the idea of the state-of-the-art RL algorithm, similar to the soft actor-critic (SAC) in \cite{haarnoja2018soft}, that is originally applicable to continuous decision-making only.
% Although prior work \cite{christodoulou2019soft} has proposed another variant of SAC for discrete tasks, it suffers from prohibitively computational complexity, which is not suitable to address the considered problem.   

% We consider the scenario that distinct sensors generate different transient content items at different storages. that need to upload to the edge node via wireless links; and user preferences towards these contents exhibit unknown and temporal dynamics. Thus, we utilize DRL to learning caching update policy that adapts to dynamic features over the network.
% Different from prior works, we dedicate to a more general model, which incorporates multiple ENs. .. Consequently, we consider a more integrative weighted cost, integrating the AoI, energy consumption, and fronthaul traffic loads. 

% To address this multi-agent decision making, conventional DRL is not favorable to be applied. ... . 

% \begin{itemize}
%  	\item The proposed approach extends prior works in several directions. 
%  	\item The derived algorithm owns an action space linearly increasing with respect to the number of edge nodes, in contrast with exponential increase in conventional discrete RL methods, i.e., Deep Q learning. 
%  \end{itemize} 

% All of the above-mentioned features lead to a very challenging problem. 

The main contributions of this paper are summarized as follows.
	\begin{itemize}
		\item 
		To the best of our knowledge, this work is the first to investigate the issue of cache update at multiple ENs in IoT sensing. 
		% We aim to 
		% integrating the tradeoff among the average AoI, transmission energy consumption, and fronthaul traffic loads. 
		We formulate a cache update problem as a cooperative multi-agent MDP, with the goal of minimizing the average AoI of caching content items plus cache update costs, i.e., transmission energy consumption and fronthaul traffic loads.
		% Particularly, we derive a characterization for average transmission energy consumption at IoT sensors under an effective wireless transmission condition.} 
		% Particularly, we impose a condition that wireless transmissions are effective when the received signal-to-ratio (SNR) is higher than a given threshold, and derive a characterization for average transmission energy consumption at IoT sensors.}

		\item 
		% To deal with the exponentially large number of discrete actions in the formulated problem, 
		To deal with the formulated problem, 
		we devise a multi-agent discrete variant of SAC with an output size that linearly increases versus the numbers of ENs and IoT sensors.
		The core idea is that we customize the {\it Gumbel-SoftMax} (GS)-sampler to approximately generate differentiable actions. 
		% As a result, the proposed algorithm not only reduces algorithm complexity but also improves final performance compared with traditional Deep RL (DRL) approaches. 
		Meanwhile, the utilization of entropy regularization can effectively enhance {\it exploration}, which assists to prevent premature convergence. 
		% The proposed approach has an output size that linearly increases versus the numbers of ENs and IoT sensors.
		% As a result, this approach has an output size that linearly increases versus the numbers of ENs and IoT sensors. 
		%  
		% \item To reduce the communication overhead between ENs and the cloud, we further generalize the proposed centralized algorithm into decentralized control. 
		% In particular, each EN serves as an independent agent with a decentralized policy, exploring its caching decisions based on local observations only. Moreover, we maintain the centralized soft Q-function at the cloud processor (CP), and globally optimize decentralized polices in the training procedure. This method augments EN coordination, favorably improving system reward.

		\item 
		To reduce the communication overhead between ENs and the cloud, 
		we further generalize the proposed centralized algorithm into decentralized control. 
		Particularly, each EN serves as an independent agent with a decentralized policy, exploring its caching decisions based on local observations. Moreover, we maintain the centralized soft Q-function at the cloud processor (CP) to augment EN coordination, favorably improving system reward.

		\item Simulation results are presented to demonstrate that the proposed RL approach outperforms existing RL-based caching policies, and unveil how transmission energy and fronthaul traffic load considerations compromise data freshness in IoT sensing networks. 
\end{itemize}

The remainder of this paper is organized as follows. Section II introduces the system model. Section III describes the problem formulation. Section IV develops a centralized DRL-based cache update scheme, and Section V develops a decentralized DRL-based cache update scheme. Section VI shows the performance evaluation, and Section VII concludes the paper. 

\section{System Model}
As depicted in Fig. \ref{fig:system_iot}, consider an IoT sensing network, in which a total of $B$ ENs are connected to the CP through wired fronthaul links. Every EN is equipped with a cache unit and a computing unit, which empower edge caching and edge computing, respectively. Consequently, such wireless networks allow ENs to serve as gateways between IoT sensors and data consumers \cite{zhu2018cachingTransient}. More specifically, ENs are capable of caching sensing data generated by various kinds of sensors within their communication ranges. Meanwhile, data consumers can submit their requests to ENs and retrieve corresponding information for data processing and analysis. Generally, data consumers in IoT applications are able to make requests from static and mobile devices (e.g., computers, phones, vehicles,  etc.) \cite{chen2019artificial}. For instance, data consumers can inspect temperatures or humidities of the environment on mobile applications.  
For ease of discussion, we assume that, each EN coordinates $F$ sensors that are randomly distributed within its coverage. As such, there are a total of $B\times F$ sensors in the network. Each sensor is supposed to communicate with the nearest EN. 
Let $\mc B = \{1,2,\cdots, B\}$ and $\mc F = \{1,2, \cdots, B\times F\}$ denote the indices of ENs and sensors. Moreover,  $\mc F_b = \{(b-1)F+1, \cdots, bF\}$ denotes the indices of the sensors coordinated by the $b$-th EN. That is, $\mc F = \cup_{b \in \mc B} \mc F_b$. 
\begin{figure}[t]
  \centering
  \includegraphics[scale=.95]{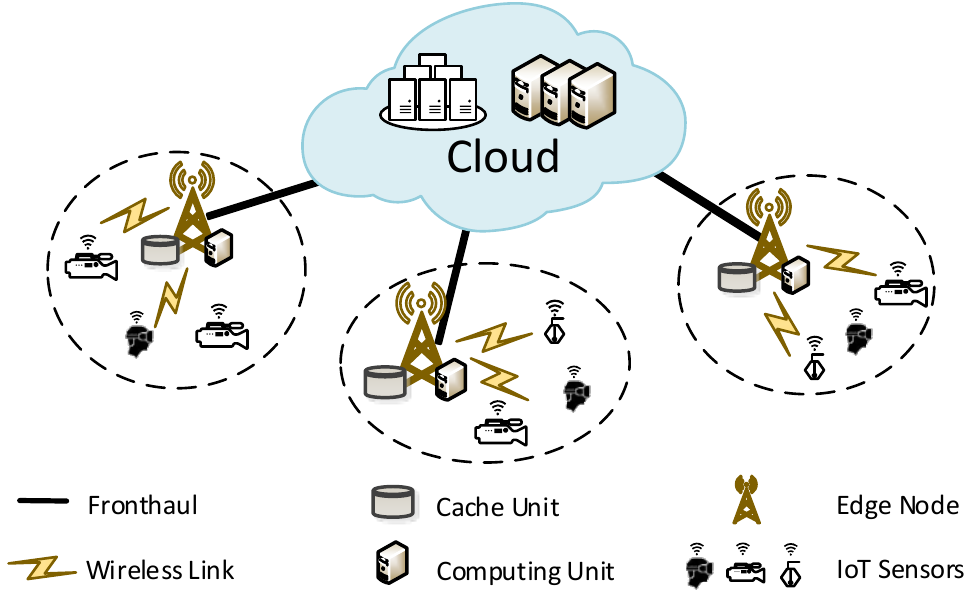}
  \caption{Illustration of an IoT sensing network.}
  \label{fig:system_iot}
\end{figure}

\subsection{Age of Information}
The system operation is assumed to be slotted into a sequence of discrete epochs, i.e., $t = 1, 2, \cdots$. In IoT sensing, each content item cached at the EN, is generated by certain IoT sensor. For instance, content item\footnote{
By slightly abusing the notation, we denote the index of either a content item or an IoT sensor by $f$, and $f \in \mc F$.}  $f \in \mc F$ implies that this content item is produced by the $f$-th sensor. In general, each caching content item can be temporally updated and replaced by a new version of the sensing data.
As such, we denote the generation epoch for the version of content item $f$ cached at epoch $t$ by $v_f^t$. Evidently, $v_f^t \leq t$. 
To evaluate data freshness of a content item, we adopt AoI as the QoS metric, which counts how many epochs has passed since this content item was produced. 
In this way, the AoI of a caching content item $f$ can be calculated as follows: 
\begin{align}
	o_f^t = \min\{t - v_f^t, 1\}, \forall f \in \mc F,
\end{align}
which takes value from a finite range, i.e., $\{ 1, 2, \cdots T_{\max}\}$; and $T_{\max}$ denotes the upper limit, which implies the most outdated level of a content item \cite{abd2020reinforcement,gindullina2020age,xu2020aoi}. We consider that each EN is able to receive user requests concerning content items generated by all of the sensors (e.g., $\forall f \in \mc F$). This is reasonable in real applications because data consumers usually have diverse preferences towards content items. 
Let $\{N_{f,b}^t\}_{f\in \mc F, b \in \mc B}$ be the number of user requests received by ENs at epoch $t$. 
% Hence, each EN is considered to maintain a copy of all sensing data generated by each sensors.
Consequently, the average AoI to satisfy user demands at epoch $t$ can be calculated as follows \cite{ma2020deep}:
\begin{align}
	O^t = \frac{\sum_{f \in \mc F, b \in \mc B} o_f^t N_{f,b}^t} {\sum_{f \in \mc F, b \in \mc B} N_{f,b}^t}. \label{eq:avgaoi}
\end{align}
As aforementioned, IoT sensing data are transient and gradually become stale as time passes. 
Therefore, it is necessary to temporally renew caching content items, so as to maintain favorable data freshness. As shown in  Fig. \ref{fig:aoidemo}, the AoI of a transient content item increments by one after every epoch; once this content item is selected to update, e.g., at epoch $t_1$ or $t_2$, the corresponding  AoI reduces to 1 at the subsequent epoch.  

\subsection{Cache Update}
To perform cache update, ENs should communicate with sensors through wireless links. 
% This process, however, inevitably requires energy consumption and generates traffic loads, which are detailed.
% As previously stated, information packets of IoT sensing data are assumed to be relatively limited such that they are able to be dispatched to ENs within a single epoch. 
Owing to channel fading, we assume the following successful transmission condition: data transmissions between IoT sensors and ENs are successful only on condition that the received SNR exceeds a pre-defined threshold $\eta_{th}$. Specifically, we assume that orthogonal channels are scheduled to different sensors. Thus, the received SNR for sensor $f$ delivering a content item to the associated EN can be expressed as:
\begin{align}
	\eta_f = \frac{P_{f} \chi_f^2 \kappa_f^2}{N_0B_0}, \forall f \in \mc F, 
\end{align}
where $P_f$ is transmission power at sensor $f$; coefficient $\chi_f$ denotes the large-scale fading; $N_0$ denotes noise power spectrum density; and $B_0$ is the channel bandwidth. In addition, $\kappa_f$ denotes the envelope of the small-scale fading, which is assumed to follow the Rayleigh distribution \cite{lee1990estimate}, e.g., $ \mathbb{P}_{\kappa_f} (\kappa_f) =  \kappa_f \exp(- \kappa_f^2/2)$.  Let $s_f$ be the storage of content item $f$.  
% such that they are able to be dispatched to ENs within a single epoch. 
Subsequently, the average transmission energy consumption for cache update, determined by channel gain and content size, is characterized as follows. 
\begin{proposition}
	\label{prop:egy}
	The average transmission energy $\bar E_f$ at the $f$-th IoT sensor ($\forall f \in \mc F)$ for dispatching sensing data to the CP is as follows:
	\begin{align}
		 \bar E_f = \frac{\log2 \times P_f s_f}{\log2 \times R_{th} \exp\left(-\frac{\eta_{th}}{2\beta_f}\right)  + B\exp\left(\frac{1}{2\beta_f}\right)\mc \rho_f(\eta_{th} +1)}, \label{eq:avgegy}
	\end{align}
	where function $\rho_f(\cdot)$ is defined as:
	\begin{align}
	 	\rho_f(x) \triangleq \int_{x}^{\infty} \frac{1}{x} \exp(-{x}/{(2 \beta_f)}) dx, \label{eq:rpo}
	 \end{align} 
	 and $\beta_f = P_f \chi_f^2 / (N_0B)$; and $R_{th}$ denotes the throughput threshold, i.e.:
	 \begin{align}
	 	R_{th} \triangleq \log_2(1 + \eta_{th}).
	 \end{align}
\end{proposition}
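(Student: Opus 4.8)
The plan is to treat $\bar E_f$ as the ratio of the transmit power--time product to an effective average rate, and to reduce the entire claim to a single expectation of $\ln(1+\eta_f)$ taken over the event of a successful transmission. First I would pin down the law of the instantaneous SNR $\eta_f = P_f \chi_f^2 \kappa_f^2/(N_0 B_0)$. Since $\kappa_f$ is Rayleigh with density $\kappa_f\exp(-\kappa_f^2/2)$, the squared envelope $\kappa_f^2$ is exponential (mean $2$), and the linear map $\eta_f = (P_f\chi_f^2/(N_0 B_0))\kappa_f^2$ keeps it exponential. A change of variables in the CDF gives $\Pr(\eta_f \le \eta) = 1 - \exp(-\eta/(2\beta_f))$, so $\eta_f$ is exponential with mean $2\beta_f$ (using the stated $\beta_f$, i.e. identifying $B_0$ with $B$ up to the paper's notation). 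In particular the successful-transmission probability is $\Pr(\eta_f \ge \eta_{th}) = \exp(-\eta_{th}/(2\beta_f))$, which I expect to surface as the factor multiplying $R_{th}$ in the final denominator.

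Next I would fix the energy model. Under the successful-transmission condition $\eta_f \ge \eta_{th}$ the achievable rate is $\log_2(1+\eta_f)$ (per unit bandwidth), so delivering $s_f$ bits at power $P_f$ costs energy $P_f s_f/\log_2(1+\eta_f)$, while no content is delivered when $\eta_f < \eta_{th}$. Defining the average energy through the outage-aware, rate-adaptive throughput $\bar R_f = \mathbb{E}\!\left[\log_2(1+\eta_f)\,\mathbb{1}\{\eta_f \ge \eta_{th}\}\right]$ and setting $\bar E_f = P_f s_f/\bar R_f$, the whole statement collapses to evaluating one expectation. Converting to natural logarithms via $\log_2(1+\eta_f) = \ln(1+\eta_f)/\ln 2$ and $R_{th} = \log_2(1+\eta_{th})$ then lines the constants up with the ``$\log 2$'' factors appearing in the claim.

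The core computation is therefore $\mathbb{E}[\ln(1+\eta_f)\,\mathbb{1}\{\eta_f \ge \eta_{th}\}] = \int_{\eta_{th}}^{\infty} \ln(1+\eta)\,\tfrac{1}{2\beta_f}\exp(-\eta/(2\beta_f))\,d\eta$, which I would attack by integration by parts with $u=\ln(1+\eta)$ and $dv = \tfrac{1}{2\beta_f}\exp(-\eta/(2\beta_f))\,d\eta$. The boundary term produces exactly $\ln(1+\eta_{th})\exp(-\eta_{th}/(2\beta_f))$, i.e. the $R_{th}$-times-success-probability contribution after the $\ln 2$ conversion, while the leftover integral $\int_{\eta_{th}}^{\infty}\exp(-\eta/(2\beta_f))/(1+\eta)\,d\eta$ becomes, under the shift $w = 1+\eta$, equal to $\exp(1/(2\beta_f))\,\rho_f(\eta_{th}+1)$. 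Substituting back, multiplying numerator and denominator by $\ln 2$ to absorb the residual $1/\ln 2$ sitting on the $\rho_f$ term, and writing $\bar E_f = P_f s_f/\bar R_f$ reproduces the claimed expression.

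The step I expect to be the genuine obstacle is this leftover integral $\int_{\eta_{th}}^{\infty}\exp(-\eta/(2\beta_f))/(1+\eta)\,d\eta$: it has no elementary antiderivative (it is an exponential-integral / incomplete-gamma object), so the derivation cannot terminate in elementary closed form. The resolution is not to try, but simply to \emph{define} the special function $\rho_f$ as in \eqref{eq:rpo}, with the shift $w = 1+\eta$ being the one substitution that matches the integrand to that definition. The only remaining care is the bookkeeping of constants --- the $\ln 2$ conversions and the apparent $B$ multiplying the $\rho_f$ term in the statement, which I read as a bandwidth-normalization / typographical artifact rather than a substantive factor --- which is routine once the expectation above is in hand.
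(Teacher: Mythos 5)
Your proposal is correct and follows essentially the same route as the paper's proof: establish the exponential law of $\eta_f$ with mean $2\beta_f$, define $\bar R_f$ as the expected throughput restricted to the success event $\{\eta_f \geq \eta_{th}\}$, integrate by parts so the boundary term yields $R_{th}\exp(-\eta_{th}/(2\beta_f))$ while the leftover integral, after the shift $w = 1+\eta_f$, becomes $\exp(1/(2\beta_f))\,\rho_f(\eta_{th}+1)$, and finally set $\bar E_f = P_f s_f/\bar R_f$. The only divergence is bookkeeping, exactly as you suspected: the paper carries the bandwidth factor $B$ inside $\bar R_f = \int_{\eta_{th}}^{\infty} B\log_2(1+\eta_f)\,\mathbb{P}(\eta_f)\,d\eta_f$ (conflating the bandwidth $B_0$ with the symbol $B$), which is why $B$ multiplies the $\rho_f$ term in the stated denominator, and the absence of a matching $B$ on the $R_{th}$ term is the paper's own typographical inconsistency rather than a gap in your argument.
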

\begin{proof}
  See Appendix \ref{appen:A}.
\end{proof}

\begin{figure}[t]
  \centering
  \includegraphics[scale=1.9]{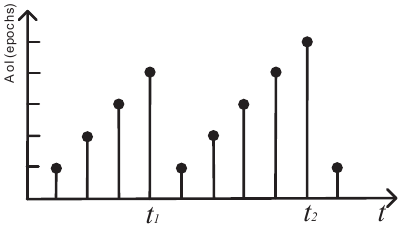}
  \caption{AoI evolution. Cache update happens at $t_1$ and $t_2$.}
  \label{fig:aoidemo}
\end{figure}

When a cached content item is updated, the associated EN should deliver the updated content item to the CP via fronthaul. Thus, if other ENs overhear requests that relate to sensors out of their coverage, they can fetch these content items from the CP. However, frequent fronthaul transmissions will impose tremendous traffic loads. We therefore allow each EN to pro-actively cache content items that are generated beyond their communication range and kept at the CP. Specifically,  when a new version of content item $f \in \mc F_b$ is uploaded to the CP, other ENs (i.e., $\forall b' \neq b$) should fetch this content item via fronthaul transmissions with traffic loads $(B-1) s_f$. We consider that the storages of cache units in ENs are sufficiently large enough to aggregate content items generated by IoT sensors (e.g., $\forall f \in \mc F$) in the network because the storage of sensing data is often at a small size in practice. 

As previously stated, since battery levels of sensors and capacity of fronthaul links are restricted in reality, caching content should be reasonably updated to achieve a favorable tradeoff among average AoI, energy consumption, and fronthaul traffic load. In view of this, we formulate a cache update decision-making strategy in the next section.

\section{MDP Problem Formulation}
Our goal is to find cache update policies, which allow ENs to reasonably update content items under different states, minimizing the long-term average weighted cost. This weighed cost is supposed to comprise average AoI, transmission energy, and fronthaul traffic loads. 
\subsection{Multi-Agent Cooperative MDP}
Note that, the average AoI  (e.g., \eqref{eq:avgaoi}) critically depends on content popularity at  ENs:
\begin{align}
	\hat p_{f,b} = N_{f,b}/\textstyle \sum_{f' \in \mc F, b \in \mc B} N_{f',b}, \forall f, b, 
\end{align}
which is usually time-varying in practice \cite{sadeghi2018optimal}. Furthermore, transmission energy consumption (e.g., see \eqref{eq:avgegy}) entails storages of content items and statistics of wireless channels that are often inhomogeneous towards distinct sensors. 
% Giving time-varying AoI and user requests,  
To efficiently coordinate ENs to make cache update decisions in such a complex environment, we formulate the cache update decision-making as a multi-agent cooperative MDP. Particularly, every EN is anticipated to play a role of an agent, and we define the basic elements of a multi-agent MDP as follows. 

% Specifically, for a $B$-agent cooperative MDP, we define a tuple $\left(\mc B, \mc S, \mc A, \mc P, \mc R, \gamma  \right)$, where $\mc B$ collects all of the agents; $\mc S $ collects system state which incorporate local observations of all agents, i.e., $\bm s = \{\bm s_1, \bm s_2, \cdots, \mb S_B\}$; and $\mc A$ denote action space for all of the joint action, i.e., $\bm a = \{\bm a_1, \bm a_2, \cdots, \bm a_B\}$. $\mc P$ collects  all the transition probability 
\begin{itemize}
	\item {\bf State:} we denote state space by $\mc S$, which contains all possible states $\bm s$. Every state $\bm s$ consists of local observations of all agents, i.e., $\bm s = \{ \bm s_1, \bm s_2, \cdots, \bm s_B\}$.  In the IoT sensing network, each agent (i.e., EN) is capable of observing the AoI of every content item and local user requests, i.e.,
	\begin{align}
		\bm s_b^t =  \left(\{o_f^t\}_{f \in \mc F}, \{N_{f,b}^t\}_{f\in \mc F}\right), 
		\forall b \in \mc B,
	\end{align}
	which is a ${2BF}$-dimensional tuple.

	\item {\bf Action:} Let $\bm a = \{ \bm a_1, \bm a_2, \cdots, \bm a_B\}$ denote a joint action, where local action $\bm a_b$ implies which content item should be selected to update. Similar to \cite{hatami2020age}, each EN $b$ is assumed to select at most one content from $\mc F_b$ at each epoch\footnote{The proposed framework can also be generalized to case where multiple content items are determined to update at each EN. This is, however, at the cost of  larger system bandwidth and energy consumption.}. Accordingly, the local action space of agent $b$ can be given by $\mc A_b = \{0\} \cup \mc F_b$; and the joint action space is given by:
	\begin{align}
		\mc A = \cup_{b \in \mc B} \mc A_b.
	\end{align}
	 Particularly, when local action $\bm a_b = 0$, it implies that EN $b$ remains idle and presents null transmission energy and traffic loads. Otherwise, the corresponding sensor needs to upload the current measurement of content item $\bm a^t \in \mc F$ into EN $b$. That is, 
	\begin{align}
		o_f^{t+1} = \min \big\{ (o_f^t + 1) \times (1- \mc I(f , \bm a_b^t)) + \mc I (f, \bm a_b^t) , T_{\max}\big\}, \forall f \in \mc F_b, \forall b \in \mc B,  
	\end{align}
	where $\mc I(\cdot)$ is an indicator function\footnote{Given parameters $x,y$, when $x=y$, we have $\mc I(x,y) = 1$; otherwise, $\mc I(x, y) = 0$.}. After each agent takes action $\bm a_b^t$, the system state becomes $\bm s^{t+1}$ with transition probability $\text{Pr}\{ \bm s^{t+1}|\bm s^t, \bm a^t\}$ at epoch $t+1$.

	\item {\bf Reward:}  In a multi-agent cooperative MDP, all agents are expected to share a common reward $r^{t+1}$, which unveils how effective a joint action $\bm a^t$ is \cite{zhang2019multi}. Recall that our objective is to minimize the average AoI  whilst reducing transmission energy consumption and fronthaul traffic loads. Consequently, we define the average weighted cost at each epoch as follows:
		\begin{align}
		C^{t+1} = \frac{\sum_{f \in \mc F, b \in \mc B} o_f^{t+1} N_{f,b}^{t+1}} {\sum_{f \in \mc F, b \in \mc B} N_{f,b}^{t+1}} + \omega_1 \sum_{b \in \mc B} \bar E_f|_{f = \bm a_b^t} +  \omega_2 \sum_{b \in \mc B} (B-1) s_f|_{f = \bm a_b^t}, \label{eq:cost}
	\end{align}
	where the first term on the right-hand side is the average AoI to satisfy user demands arrived at epoch $t+1$; $\omega_1$ and $\omega_2$ are non-negative coefficients to weigh the importance of energy and traffic cost. For notational simplicity, we define $\bar E_0 = 0$ and $ s_0 = 0$, respectively. In accordance with reward maximization, we define the reward $r^{t+1} \triangleq R(\bm s^{t+1}, \bm s^t, \bm a^t) $, where 
	\begin{align}
		R(\bm s^{t+1}, \bm s^t, \bm a^t) = - C^{t+1}, \label{eq:insreward}
	\end{align}
	which is a negative value. 
\end{itemize}
Consequently, we aim to find a caching policy $\pi^*$, which is able to generate a joint action $\bm a$ given any state $\bm s$ that maximizes the expected discounted cumulative reward as follows:
\begin{align}
	\pi^* = \arg\max_{\pi} \mathbb{E} [V^t|\pi], \label{eq:rl} 
\end{align}
where 
\begin{align}
	V^t = \sum_{\tau = 0}^{\infty} \gamma^{(\tau)} r^{t+\tau+1}, \label{eq:v}
\end{align}
and $\gamma \in [0,1)$ is a discounted factor. 

To address problem \eqref{eq:rl}, one can resort to model-based approaches \cite{sutton2018reinforcement}, which usually rely on the knowledge of Pr$\{\bm s^{t+1}|\bm s^t, \bm a^t\}$. However, in practice, transition probability is usually uncertain and difficult to estimate. Even if we could have this knowledge, our problem is still intractable due to the curse of dimensionality in multi-agent settings. 
These challenges motivate us to explore data-driven approaches, i.e., RL, which is as a consequence of properly utilizing past experiences. In the following sections, we develop efficient RL algorithms to handle the formulated problem. 

\section{Centralized Multi-Agent Discrete Soft Actor-Critic-Based Caching}

In this section, we develop a centralized RL algorithm for cache update in IoT sensing network, where the CP acts as the centralized agent and coordinates caching decisions for all ENs. To what follows, we first outline the background of RL and identify the challenges of conventional RL algorithms to solve the considered problem. Then, we devise an efficient RL approach, which is a multi-agent discrete variant of the state-of-the-art RL. Finally, we present the concrete algorithm implementation.  

\subsection{Background of Reinforcement Learning}
Canonical RL generally aims to estimate the following Q-function: 
\begin{align}
	Q^*(\bm s, \bm a) = \mathbb[V^t|\bm s^t = \bm s, \bm a^t = \bm a, \pi^*], 
\end{align}
which indicates the expected cumulative reward after taking action $\bm a^t$ under state $\bm s^t$,  subsequently following policy $\pi^*$. An optimal policy can be characterized by the following {\it Bellman Optimality}.
\begin{Lemma}
\label{eq:bellop}
	An optimal policy $\pi^*$ leads to the following recursive equations \cite{sutton2018reinforcement}:
\begin{align}
	Q^*(\bm s, \bm a) = R(\bm s', \bm s, \bm a)  + \gamma \max_{\bm a' \in \mc A} Q^*(\bm s', \bm a'),
\end{align}
where $Q^*(\bm s, \bm a)$ denotes the optimal Q-function by following $\pi^*$.
% an optimal action under state $\bm s$ is given by 
% \begin{align}
% 	\bm a^* = \arg\max_{\bm a \in \mc A} Q^*(\bm s', \bm a).
% \end{align}
\end{Lemma}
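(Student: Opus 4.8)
The plan is to establish the Bellman optimality recursion directly from the definition of $Q^*$, exploiting the self-similar structure of the discounted return $V^t$ in \eqref{eq:v} together with the Markov property of the transition kernel $\pr\{\bm s^{t+1}|\bm s^t,\bm a^t\}$. The starting point is a one-step unrolling of the return: splitting off the $\tau=0$ term and re-indexing the remaining geometric tail gives $V^t = r^{t+1} + \gamma V^{t+1}$. Everything then reduces to taking a conditional expectation of both sides and recognizing each piece.

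First I would substitute this decomposition into $Q^*(\bm s,\bm a)=\mathbb{E}[V^t\,|\,\bm s^t=\bm s,\bm a^t=\bm a,\pi^*]$ and use linearity of expectation to separate the immediate reward from the continuation value. The immediate term is $\mathbb{E}[r^{t+1}\,|\,\bm s^t=\bm s,\bm a^t=\bm a]$, which by \eqref{eq:insreward} equals $R(\bm s',\bm s,\bm a)$ once the realized successor $\bm s'$ is conditioned upon; here I would flag that the stated recursion suppresses the outer expectation $\sum_{\bm s'}\pr\{\bm s'|\bm s,\bm a\}(\cdot)$ over the random next state, which should be reinstated whenever the transition is genuinely stochastic (as it is through the request counts $N_{f,b}^{t+1}$). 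For the continuation term I would apply the law of total expectation, conditioning on $\bm s^{t+1}=\bm s'$, and then invoke the Markov property so that the tail expectation depends on the history only through $\bm s'$, namely $\mathbb{E}[V^{t+1}\,|\,\bm s^{t+1}=\bm s',\pi^*]$.

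The key step, and the main obstacle, is identifying this continuation value with $\max_{\bm a'\in\mc A}Q^*(\bm s',\bm a')$. This is precisely the principle of optimality: a stationary optimal policy, restarted from $\bm s'$, must itself be optimal for the sub-problem onward, so its value equals $V^*(\bm s')=\max_{\bm a'}Q^*(\bm s',\bm a')$. I would justify the greedy characterization by a policy-improvement/exchange argument: if at $\bm s'$ the policy chose any action not attaining the maximum of $Q^*(\bm s',\cdot)$, one could strictly raise the expected return by switching to an argmax action while leaving the remainder of the policy unchanged, contradicting optimality of $\pi^*$. Time-homogeneity of the MDP is what makes the tail Q-function coincide with the same $Q^*$ on the left-hand side. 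Substituting $\max_{\bm a'}Q^*(\bm s',\bm a')$ for the continuation value and combining with the immediate-reward term yields the claimed equation. I expect the only delicate points to be the rigorous justification of the greedy substitution and the existence of an optimal stationary policy, both of which follow in the discounted infinite-horizon setting from the contraction property of the Bellman operator; the remainder is routine algebra on the geometric series.
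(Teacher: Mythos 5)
Your proposal is correct, but there is nothing in the paper to compare it against step by step: the paper does not prove this lemma at all, it simply quotes the Bellman optimality equation as a standard result with a citation to Sutton and Barto, reserving its appendix for Proposition \ref{prop:egy} and Lemma \ref{lemma:gs}. Your argument is the classical textbook derivation --- one-step unrolling $V^t = r^{t+1} + \gamma V^{t+1}$ of the discounted return in \eqref{eq:v}, linearity of expectation, the Markov property to collapse the tail onto the successor state, and the principle of optimality (via a policy-improvement exchange) to replace the continuation value by $\max_{\bm a'} Q^*(\bm s', \bm a')$ --- and it is sound, with the genuinely delicate points (existence of a stationary optimal policy, rigor of the greedy substitution) correctly flagged and correctly attributed to the contraction property of the Bellman operator in the discounted setting. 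Your side remark is also a real catch: as stated in the lemma, the right-hand side depends on the realized successor $\bm s'$ while the left-hand side depends only on $(\bm s, \bm a)$, so the equation is only literally an identity for deterministic transitions; since the request counts $N_{f,b}^{t+1}$ entering the reward \eqref{eq:cost} are random, the correct statement needs the expectation $\mathbb{E}_{\bm s' \sim \pr\{\cdot|\bm s,\bm a\}}[\,\cdot\,]$ wrapped around the right-hand side, exactly as the paper itself writes later in the soft-Q recursion \eqref{eq:sq}. In short: your proof supplies rigor the paper delegates to a reference, and it repairs a notational imprecision in the lemma's statement along the way.
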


Lemma \ref{eq:bellop} lays the foundation for DQN. As a popular approach for discrete decision-marking, DQN utilizes deep neural networks (DNNs) as function approximators to predict the optimal Q-function.
% , i.e., $Q_{\bth} (\bm s, \bm a)$, where $\bth$ denote the parameter of the neural network. 
% by properly making use of historical experiences. 
Readers are referred to \cite{mnih2015human} in detail. Consequently, an optimized policy can be given by a mapping as follows:
\begin{align}
	\bm s^{t} \rightarrow  \arg\max_{\bm a \in \mc A} \widehat Q (\bm s^t, \bm a).
\end{align}
In other words, DQN needs to output the value of Q-function over all possible discrete actions given any state (i.e., 
$
	\bm s \rightarrow \mathbb{R}^{|\mc A|}
$). This practice results in slow convergence and brittleness to scalability. 

For this reason, DQN is extremely difficult to be applied in multi-agent and high dimensional settings \cite{haarnoja2018soft}. In the considered problem, the size of action space exponentially increases versus the number of ENs and polynomially increases versus the number of sensors, i.e., $(F+1)^B$. 
For instance, consider a simple setting: three ENs are deployed, each of which coordinates $F = 10$ sensors; the number of resulting discrete actions is 1331, which leads to large network size and slow convergence. When $F = 20$, the number of discrete actions goes up to $9261$ that is almost intractable. To overcome these challenges, we can adopt the AC-based approaches, where an independent function approximator is utilized to generate actions, instead of relying on the Q-function. 

\subsection{Proposed Multi-Agent Discrete Soft Actor-Critic Learning}
SAC is the state-of-the-art RL algorithm, which is as a result of an entropy regularized formalism that augments {\it exploration} \cite{haarnoja2018soft}. 
 % that augments exploration and sample efficiency.
This approach entails an AC framework, which specifies stochastic policy and soft Q-function separately. That is, SAC attempts to find a stochastic policy that maximizes the expected cumulative reward while taking diverse actions as many as possible. Consequently, SAC is able to achieve high sample efficiency \cite{haarnoja2018soft}. However, the SAC in \cite{haarnoja2018soft} is only applicable in continuous settings. We now develop a multi-agent discrete variant of SAC learning, which is suitable to handle discrete decision-making especially in high dimensional settings. 

Similarly, our objective is to find a stochastic policy $\pi(\bm a|\bm s)$ that maximizes the expected cumulative reward plus its entropy, i.e.:
\begin{align}
	\pi^* = \arg\max_{\pi} \mathbb{E}_{\{\bm s^t, \bm a^t\}} \left[ \sum_{t=0}^{+\infty}(\gamma)^{t}  \bigg(r^{t+1} + \alpha \mc H(\pi(\cdot|\bm s^t)) \bigg) \right], \label{eq:sac} 
\end{align}
where $\pi(\cdot |\bm s^t)$ is a categorical distribution indicating the probability of taking any action under state $\bm s^t$; $\mc H(\cdot)$ denotes the entropy of a distribution; $\alpha$ is the temperature parameter and controls the magnitude of entropy regularization. In general, a lager $\alpha$ prompts the agents to carry out a more random {\it exploration} in making decisions. 

Accordingly, the soft Q-function can be defined as follows:
\begin{align}
	Q(\bm s, \bm a) = \mathbb{E} \left[ V^t +  \alpha \sum_{\tau=1}^{+\infty}(\gamma)^{\tau}  \mc H(\pi(\cdot|\bm s^{t+\tau}))  \big|\bm s^t = \bm s, \bm a^t = \bm a \right],
\end{align}
where $V^t$ is the discounted cumulative reward, given by \eqref{eq:v}. This further gives rise to the soft value function:
\begin{align}
	V(\bm s) = \mathbb E_{\bm a \sim \pi } \left[ Q(\bm s, \bm a) - \alpha \log \pi (\bm a|\bm s) \right]. \label{eq:sv}
\end{align}
% the soft Q-function can be given as follows:
% \begin{align}
% 	Q(\bm s, \bm a) = \mathbb{E} \left[ V^t \sum_{\tau = 0}^{+\infty} (\gamma)^\tau  r^{\tau+1} \right]
% \end{align}
According to Lemma \ref{eq:bellop}, we have the similar recursive equality as follows:
\begin{align}
	Q(\bm s^{t}, \bm a^t) = \mathbb{E}_{\bm s^{t+1}} \left [ r^{t+1} + \gamma V(\bm s^{t+1}) \right].  \label{eq:sq} 
\end{align}
To pave the way for the multi-agent discrete SAC (MADSAC), we introduce the {\it Soft Policy Improvement} \cite{haarnoja2018soft}.
\begin{Lemma}
Given policy $\pi_{old}$ and soft Q-function $Q_{old}(\bm s, \bm a)$ with a finite size of action space $\mc A$, a new policy $\pi_{new} $ can be calculated by:
\begin{align}
 	\min_{\pi'} D_{KL} \bigg( \pi'(\cdot |\bm s) \|\exp(Q_{old} (\bm s, \cdot)/\alpha)/Z(\bm s) \bigg), \forall \bm s \in \mc S, \label{eq:spi}
 \end{align} 
 where $D_{KL}(\cdot \|\cdot)$ denotes the operator of the Kullback-Leibler divergence, and $Z(\cdot)$ is used for normalization. Then, it leads to $Q_{new} (\bm s, \bm a) \geq Q_{old} (\bm s, \bm a)$ for any $(\bm s, \bm a) \in \mc S \times \mc A$ \cite{haarnoja2018soft,haarnoja12018soft}. 
 \end{Lemma}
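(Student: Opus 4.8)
The plan is to adapt the classical policy-improvement argument to the entropy-regularized setting, exploiting the fact that the minimizer $\pi_{new}$ of the KL objective \eqref{eq:spi} is, by construction, at least as good as $\pi_{old}$ when measured against the same Boltzmann target. First I would observe that $\pi_{old}(\cdot|\bm s)$ is itself a feasible point for the minimization in \eqref{eq:spi} at every state $\bm s$, so optimality of $\pi_{new}$ forces $D_{KL}\big(\pi_{new}(\cdot|\bm s)\,\|\,p_{\bm s}\big) \le D_{KL}\big(\pi_{old}(\cdot|\bm s)\,\|\,p_{\bm s}\big)$, where $p_{\bm s} = \exp(Q_{old}(\bm s,\cdot)/\alpha)/Z(\bm s)$ denotes the common target distribution.

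Next I would expand each divergence using $\log p_{\bm s}(\bm a) = Q_{old}(\bm s,\bm a)/\alpha - \log Z(\bm s)$. Multiplying through by $\alpha>0$ and cancelling the state-dependent but policy-independent normalizer $\log Z(\bm s)$ that appears identically on both sides, the inequality collapses to
\begin{align}
\mathbb{E}_{\bm a \sim \pi_{new}}\big[\alpha\log\pi_{new}(\bm a|\bm s) - Q_{old}(\bm s,\bm a)\big] \le \mathbb{E}_{\bm a \sim \pi_{old}}\big[\alpha\log\pi_{old}(\bm a|\bm s) - Q_{old}(\bm s,\bm a)\big].
\end{align}
Rearranging and recognizing the right-hand side as the negative of the soft value function \eqref{eq:sv} under $\pi_{old}$, this asserts the one-step improvement $\mathbb{E}_{\bm a\sim\pi_{new}}[Q_{old}(\bm s,\bm a)-\alpha\log\pi_{new}(\bm a|\bm s)] \ge V_{old}(\bm s)$ for every $\bm s \in \mc S$.

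Then I would bootstrap this state-wise bound through the soft Bellman recursion \eqref{eq:sq}. Writing $Q_{old}(\bm s^t,\bm a^t)=\mathbb{E}_{\bm s^{t+1}}[r^{t+1}+\gamma V_{old}(\bm s^{t+1})]$ and substituting the just-derived lower bound for $V_{old}(\bm s^{t+1})$ yields
\begin{align}
Q_{old}(\bm s^t,\bm a^t) \le \mathbb{E}_{\bm s^{t+1}}\Big[r^{t+1}+\gamma\,\mathbb{E}_{\bm a^{t+1}\sim\pi_{new}}\big[Q_{old}(\bm s^{t+1},\bm a^{t+1})-\alpha\log\pi_{new}(\bm a^{t+1}|\bm s^{t+1})\big]\Big].
\end{align}
The right-hand side is precisely one application of the soft Bellman backup under $\pi_{new}$ to $Q_{old}$. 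Iterating this backup and re-applying the inequality at each stage produces a monotone non-decreasing sequence of Q-estimates whose limit I would identify with $Q_{new}$, thereby concluding $Q_{new}(\bm s,\bm a)\ge Q_{old}(\bm s,\bm a)$ for all $(\bm s,\bm a)\in\mc S\times\mc A$.

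The main obstacle is this final limiting step: a single substitution only delivers a one-step improvement, so I must justify that the iterated soft backups actually converge to $Q_{new}$. This requires invoking soft policy evaluation, namely that the entropy-augmented backup operator under a fixed policy is a $\gamma$-contraction in the sup-norm with unique fixed point $Q_{new}$. That property holds here precisely because $\gamma\in[0,1)$ and the action space $\mc A$ is finite, which keeps the entropy term $-\alpha\log\pi_{new}$ bounded so the augmented reward is well-defined and the standard contraction argument carries over; taking the limit then preserves the inequality. The earlier steps are routine manipulations of the KL divergence together with the definitions \eqref{eq:sv} and \eqref{eq:sq}.
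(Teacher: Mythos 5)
Your proof is correct and is essentially the same argument as the one the paper relies on for this lemma (the paper gives no proof of its own, deferring to the cited SAC papers): feasibility of $\pi_{old}$ in the KL minimization, cancellation of the state-dependent normalizer $\log Z(\bm s)$ to obtain the one-step bound $\mathbb{E}_{\bm a\sim\pi_{new}}[Q_{old}(\bm s,\bm a)-\alpha\log\pi_{new}(\bm a|\bm s)]\ge V_{old}(\bm s)$, followed by iterated soft Bellman backups whose convergence to $Q_{new}$ is guaranteed by soft policy evaluation. Your explicit handling of the temperature $\alpha$ and your observation that the finite action space keeps the entropy term bounded (so the backup operator remains a $\gamma$-contraction) correctly fill in the only details needed to adapt that standard argument to the statement as given here.
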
 

Following the elementary steps of the SAC in \cite{haarnoja2018soft}, we consider a parameterized $Q_{\bphi}(\bm s, \bm  a)$ and policy $\pi_{\bth}(\cdot| \bm s)$, where $\bphi$ and $\bth$ are parameters of some function approximators, e,g., DNNs. Since the soft value function can be expressed by \eqref{eq:sv}, we do not incorporate a separate function approximator here. In addition, $Q_{\bphi}(\bm s, \bm a)$ denotes a mapping with unit output, i.e., $(\bm s, \bm a) \rightarrow \mathbb{R}$, instead of the number of all possible discrete actions. This is because we can solely use policy function to generate decisions. However, the utilization of stochastic policy $\pi_{\bth}(\cdot| \bm s)$ in SAC still involves estimates probability distribution over all actions that increase exponentially w.r.t. the number of agents. 

To confine the output of $\pi_{\bth}(\cdot| \bm s)$ in multi-agent RL, we first propose to impose the following decomposition:
\begin{align}
	 \pi_{\bth} (\bm a |\bm s) = \Pi_{b \in \mc B} ~ \mu_{\bth} (\bm a_b|\bm s), \forall \bm a \in \mc A, \bm s \in \mc S, \label{eq:pmulti}
\end{align}
where $\bm a = \{\bm a_1, \bm a_2, \cdots, \bm a_B\}$. 
In this way, policy $\pi_{\bth}$ can be implicitly represented by a function approximator $\mu_{\bth}(\cdot|\bm s)$, which outputs only a vector of the probability of taking each local action (e.g., $\bm a_b \in \mc A_b, \forall b \in \mc B$) with output dimension $\sum_{b \in \mc B} |\mc A_b|$. That is,
$
	\sum_{\bm a_b \in \mc A_b} \mu_{\bth} (\bm a_b |\bm s) = 1,
$
for $\forall b \in \mc B$. 

Then, on the basis of \eqref{eq:sv}-\eqref{eq:spi}, we train the above parameterized functions by using historical experiences, e.g., $\xi^t = (\bm s^t, \bm a^t, r^{t+1}, \bm s^{t+1})$. We use the {\it Replay Buffer (RB)} to store some recent experiences, i.e., $ \xi^t \in \Xi$. 
Specifically, we can train policy function according to {\it Soft Policy Improvement}. 
By omitting the normalization factor $Z(\cdot)$ in \eqref{eq:spi}, the policy parameter (i.e., $\bth$) can be trained by adopting stochastic gradient descent to minimize the following loss: 
 \begin{align}
 	J_\pi (\bth) = \mathbb E_{\bm s^t \sim \Xi} \left [  \mathbb E_{\bm a \sim \pi_{\bth}} [\alpha \log(\pi_{\bth} (\bm a| \bm s^t))  - Q_{\bphi} (\bm s^t, \bm a)]\right], \label{eq:lossp}
 \end{align}
 where the expectation over $\bm s^t$ can be approximated by drawing samples from the {\it RB}. However, \eqref{eq:lossp} incorporates an expectation over actions following policy distribution $\pi_{\bth} (\cdot|\bm s^t)$. A challenging issue is that the gradient w.r.t. $\bth$ can not be backpropagated in a normal manner if we directly utilize $\pi_{\bth}(\cdot|\bm s^t)$ to generate samples. 
To deal with it, we resort to the technique of {\it reparameterization trick} \cite{jang2016categorical}. The core idea is to find samples that incorporate the parameter of  policy distribution, e.g., $\bth$.

Given a categorical distribution $\pi_{\bth}$ with decomposition in \eqref{eq:pmulti}, we have the following Lemma. 
\begin{Lemma}
\label{lemma:gs}
Suppose that 
$\{g_{i,b}, \forall i \in \mc A_b, b \in \mc B\}$ are i.i.d. and follow\footnote{That is, $x \sim \exp(- (x + \exp(-x)))$} Gumbel(0,1). Given a stochastic policy $\pi_{\bth}$ for a multi-agent discrete decision-making, a joint action $\hat {\bm a} = \{\hat {\bm a}_1, \hat {\bm a}_2, \cdots, \hat {\bm a}_B\}$ can be generated as follows:  
\begin{align}
	\hat{\bm a}_b = \arg\max_{i \in \mc A_b}~ [g_{i,b} + \log \mu_{\bth} (i| \bm s)] , \forall b \in \mc B.
\end{align}
That is, Pr$\{\bm a = \hat{\bm a}|\bm s\} = \pi_{\bth} (\hat{\bm a}|\bm s)$. 
\end{Lemma}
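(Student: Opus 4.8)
The plan is to establish the Gumbel-Max trick for a single agent's categorical distribution $\mu_{\bth}(\cdot|\bm s)$, and then lift it to the joint distribution via the independence assumption on the Gumbel noise and the product decomposition in \eqref{eq:pmulti}. I would begin by fixing an agent $b \in \mc B$ and showing that for a single categorical distribution with probabilities $\mu_{\bth}(i|\bm s)$ over $i \in \mc A_b$, the random variable $\hat{\bm a}_b = \arg\max_{i \in \mc A_b} [g_{i,b} + \log \mu_{\bth}(i|\bm s)]$ is distributed exactly as $\mu_{\bth}(\cdot|\bm s)$. The key computation here is to evaluate $\text{Pr}\{\hat{\bm a}_b = k\}$ for a fixed $k \in \mc A_b$.

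First I would condition on the value of the winning perturbed log-probability. Writing $u_i = g_{i,b} + \log \mu_{\bth}(i|\bm s)$, each $u_i$ has a shifted Gumbel distribution, so its CDF is $\exp(-\mu_{\bth}(i|\bm s)\,e^{-u})$ and its density is $\mu_{\bth}(i|\bm s)\,e^{-u}\exp(-\mu_{\bth}(i|\bm s)\,e^{-u})$; this follows directly from the stated Gumbel(0,1) law $x \sim \exp(-(x+\exp(-x)))$ together with the location shift by $\log \mu_{\bth}(i|\bm s)$. Then, using independence of the $\{g_{i,b}\}$, I would write
\begin{align}
	\text{Pr}\{\hat{\bm a}_b = k\} = \int_{-\infty}^{\infty} \mu_{\bth}(k|\bm s)\,e^{-u}\exp(-\mu_{\bth}(k|\bm s)\,e^{-u}) \prod_{i \neq k} \exp(-\mu_{\bth}(i|\bm s)\,e^{-u})\, du. \nonumber
\end{align}
The product over $i \neq k$ combined with the $k$-th factor collapses to $\exp(-(\sum_{i} \mu_{\bth}(i|\bm s))\,e^{-u}) = \exp(-e^{-u})$ since the probabilities sum to one, and the integral of $\mu_{\bth}(k|\bm s)\,e^{-u}\exp(-e^{-u})$ over $u \in \mathbb{R}$ evaluates to $\mu_{\bth}(k|\bm s)$ by the substitution $w = e^{-u}$. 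This yields the single-agent identity $\text{Pr}\{\hat{\bm a}_b = k\} = \mu_{\bth}(k|\bm s)$.

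To finish, I would invoke the independence of the Gumbel samples \emph{across} agents: since $\{g_{i,b}\}$ are i.i.d.\ over both $i$ and $b$, the argmax operations for distinct agents depend on disjoint collections of noise variables and are therefore mutually independent. Hence the joint law factorizes as $\text{Pr}\{\hat{\bm a} = \bm a|\bm s\} = \prod_{b \in \mc B} \text{Pr}\{\hat{\bm a}_b = \bm a_b|\bm s\} = \prod_{b \in \mc B} \mu_{\bth}(\bm a_b|\bm s)$, which by the decomposition \eqref{eq:pmulti} equals $\pi_{\bth}(\bm a|\bm s)$, establishing the claim.

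The main obstacle I anticipate is the single-agent integral evaluation, specifically justifying the collapse of the product of Gumbel CDFs using $\sum_i \mu_{\bth}(i|\bm s) = 1$ and then carrying out the change of variables cleanly; this is where sign conventions and the normalization constant must be tracked carefully. The multi-agent step is comparatively routine once the noise is assumed i.i.d.\ across agents, reducing to a factorization argument, though I would be careful to state explicitly that the per-agent argmaxes use non-overlapping noise variables so that their independence genuinely follows rather than being assumed.
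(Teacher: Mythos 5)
Your proposal is correct and takes essentially the same route as the paper's own proof: both condition on the value of the winning perturbed log-probability, integrate the product of shifted-Gumbel CDFs against the winner's density, collapse that product via $\sum_{i \in \mc A_b} \mu_{\bth}(i|\bm s) = 1$, and then factorize across agents using the independence of the noise together with the decomposition \eqref{eq:pmulti}. The only differences are cosmetic (your $u_i$ versus the paper's $\varpi_i$, and your explicit substitution $w = e^{-u}$ where the paper evaluates the resulting Gumbel-density integral directly).
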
 
\begin{proof}
	See Appendix \ref{appen:B}.
\end{proof}

The above Lemma is an extension of the results in \cite{jang2016categorical}. As a result, sample $\hat{\bm a}$ is a function of $\bth$. To make samples differentiable w.r.t. $\bth$, the non-differentiable operator $\arg\max$ can be further approximated by {\it SoftMax}, resulting in a GS-sampler. 
\begin{Corollary}
\label{coro:1}
 Suppose vector $\bm z_b = [z_{i,b}] \in \mathbb{R}^{|\mc A_b|}, \forall b \in \mc B$, where each element is given by:
 	\begin{align}
 		 z_{i,b} = \frac{ \exp((\log(\mu_{\bth} (i |\bm s)) + g_{i,b} )/c_0 ) }{\sum_{j \in \mc A_b}  \exp((\log(\mu_{\bth} (j |\bm s)) + g_{j,b} )/c_0 ) }, \forall i \in \mc A_b, \forall b \in \mc B. \label{eq:gssampler}
 	\end{align}
 Then, when coefficient $c_0$ goes to 0, $\bm z_b$ approaches an one-hot vector with one element being 1 and all other elements being 0, $\forall b \in \mc B$. 
\end{Corollary}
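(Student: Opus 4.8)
The plan is to reduce the limiting behaviour of the softmax in \eqref{eq:gssampler} to that of a \emph{hardmax} (argmax), exploiting the fact that as the temperature $c_0$ shrinks the numerator of the dominant logit overwhelms all the others. First I would fix an agent $b \in \mc B$ and a state $\bm s$, and introduce the perturbed logits $y_{i,b} \triangleq \log \mu_{\bth}(i|\bm s) + g_{i,b}$ for $i \in \mc A_b$, so that $z_{i,b} = \exp(y_{i,b}/c_0)/\sum_{j \in \mc A_b}\exp(y_{j,b}/c_0)$. Because the $g_{i,b}$ are i.i.d. draws from a continuous (Gumbel) distribution, the event that two of the $y_{i,b}$ coincide has probability zero; hence the maximizer $i^\star \triangleq \arg\max_{i \in \mc A_b} y_{i,b}$ is almost surely unique, and I would condition on this full-measure event for the remainder of the argument.

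The key algebraic step is to normalize by the dominant term. Dividing the numerator and denominator of $z_{i,b}$ by $\exp(y_{i^\star,b}/c_0)$ gives
\begin{align}
	z_{i,b} = \frac{\exp\big((y_{i,b}-y_{i^\star,b})/c_0\big)}{\sum_{j \in \mc A_b}\exp\big((y_{j,b}-y_{i^\star,b})/c_0\big)}, \quad \forall i \in \mc A_b. \nonumber
\end{align}
By the definition of $i^\star$, every difference $y_{j,b}-y_{i^\star,b}$ is nonpositive, and strictly negative for $j \neq i^\star$. I would then let $c_0 \to 0^+$: for each $j \neq i^\star$ the exponent $(y_{j,b}-y_{i^\star,b})/c_0 \to -\infty$, so the corresponding exponential tends to $0$, while the $j = i^\star$ term stays equal to $1$. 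Hence the denominator converges to $1$, which yields $z_{i^\star,b} \to 1$. For any $i \neq i^\star$ the same normalization gives the squeeze $0 \le z_{i,b} \le \exp\big((y_{i,b}-y_{i^\star,b})/c_0\big) \to 0$. Since the entries of $\bm z_b$ are nonnegative and sum to $1$ for every $c_0 > 0$, the limit is exactly the one-hot vector $\bm e_{i^\star}$ supported on $i^\star$, as claimed; repeating the argument for every $b \in \mc B$ completes the proof.

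The main obstacle is purely the uniqueness of the argmax: were the largest logit attained by a tie, the normalization above would leave several surviving terms, and the limit would be a uniform split over the tied coordinates rather than a one-hot vector. This is handled cleanly by the continuity of the Gumbel perturbations, which makes ties a null event, so the conclusion holds almost surely. As a consistency check I would note that the surviving index $i^\star$ is precisely the discrete action $\hat{\bm a}_b$ produced by the reparameterization of Lemma~\ref{lemma:gs}; thus the GS-sampler in \eqref{eq:gssampler} is a differentiable surrogate that recovers the exact categorical sample of $\pi_{\bth}$ in the zero-temperature limit, which is exactly the property needed to backpropagate the gradient in \eqref{eq:lossp}.
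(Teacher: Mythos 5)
Your proof is correct. Note that the paper itself gives no proof of Corollary~\ref{coro:1}: it is stated as an immediate consequence of Lemma~\ref{lemma:gs} together with the SoftMax approximation of $\arg\max$, with the details implicitly deferred to the Gumbel-SoftMax literature (Jang et al., the paper's reference for the reparameterization trick). Your argument supplies exactly the missing details in the standard way: writing the logits $y_{i,b}=\log\mu_{\bth}(i|\bm s)+g_{i,b}$, normalizing numerator and denominator by the dominant term $\exp(y_{i^\star,b}/c_0)$, and letting $c_0\to 0^+$ so that every non-maximal term vanishes while the maximal one stays at $1$. Two aspects of your write-up are worth keeping because they sharpen the paper's informal statement: (i) you make explicit that the conclusion holds \emph{almost surely}, with uniqueness of the argmax guaranteed by the continuity of the Gumbel perturbations (ties being a null event), which is the precise sense in which the corollary is true; and (ii) your closing observation that the surviving index $i^\star$ coincides with the action $\hat{\bm a}_b$ of Lemma~\ref{lemma:gs} is exactly what justifies calling $GS(\pi_{\bth}(\cdot|\bm s))$ a differentiable surrogate for the categorical sample --- the property the paper needs in order to backpropagate through \eqref{eq:lossp}. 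No gaps.
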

In other words, through reshaping local actions as one-hot vectors, a joint action can be approximately generated by using GS-sampler, i.e., $GS(\bm \pi_{\bth} (\cdot|\bm s))$, that is differentiable w.r.t. $\bth$. Toward this end, with the aid of GS-sampler, \eqref{eq:lossp} can be minimized by stochastic gradient descent, which will be detailed in the next subsection.

Regarding the updates of soft Q-function and temperature parameter $\alpha$, the procedure exactly follows the steps in \cite{haarnoja2018soft}. Specifically, the parameterized soft Q-function can be updated by minimizing the following loss term:  
\begin{align}
	J_Q(\bphi) = \mathbb{E}_{\xi^t \sim \Xi} \left[ \left( Q_{\bphi}(\bm s^t, \bm a^t) -  (r^{t+1} + \gamma \bar V (\bm s^{t+1}) )\right) ^2 \right], \label{eq:lossq}
 \end{align}
where the target value $\bar V(\bm s^{t+1})$ is given by:
\begin{align}
	\mathbb E_{\bm a \sim \pi_{\bth} } \left[ Q_{\bphi^-} (\bm s^{t+1}, \bm a) - \alpha \log \pi_{\bth} (\bm a|\bm s^{t+1}) \right], \label{eq:targetq}
\end{align}
and $Q_{\bphi^-} (\bm s, \bm a)$ is the target network with parameter $\bphi^-$ being updated in the {\it soft copy} manner \cite{Lillicrap2015ContinuousCW}. This procedure is detailed in the next subsection. 
% In addition, the expectation operation in \eqref{eq:lossq} can be approximated by randomly drawing samples from {\it RB}; and the expectation operation in \eqref{eq:targetq} can be evaluated by drawing action $\bm a$ from policy function $\pi_{\bth}(\cdot|\bm s^{t+1})$, which is independent of parameter $\bphi$. 
The temperature parameter $\alpha$ is updated by minimizing the follow term:
\begin{align}
	J(\alpha) = \mathbb{E}_{\bm s^t \sim \Xi} \left[\mathbb{E}_{\bm a \sim \pi} [-\alpha (\log(\pi_{\bth} (\bm a|\bm s^t)) - \bar H)]\right], \label{eq:losstem}
\end{align}
where $\bar H$ is constant and denotes the target entropy. Notably, \eqref{eq:targetq} and \eqref{eq:losstem} again involve taking an expectation over action $\bm a$. Since action samples do not contribute to the gradient of $\bphi$ or $\alpha$, there is no need in principle to draw action samples via GS-sampler. 
% \begin{figure}[t]
%   \centering
%   \includegraphics[scale=0.7]{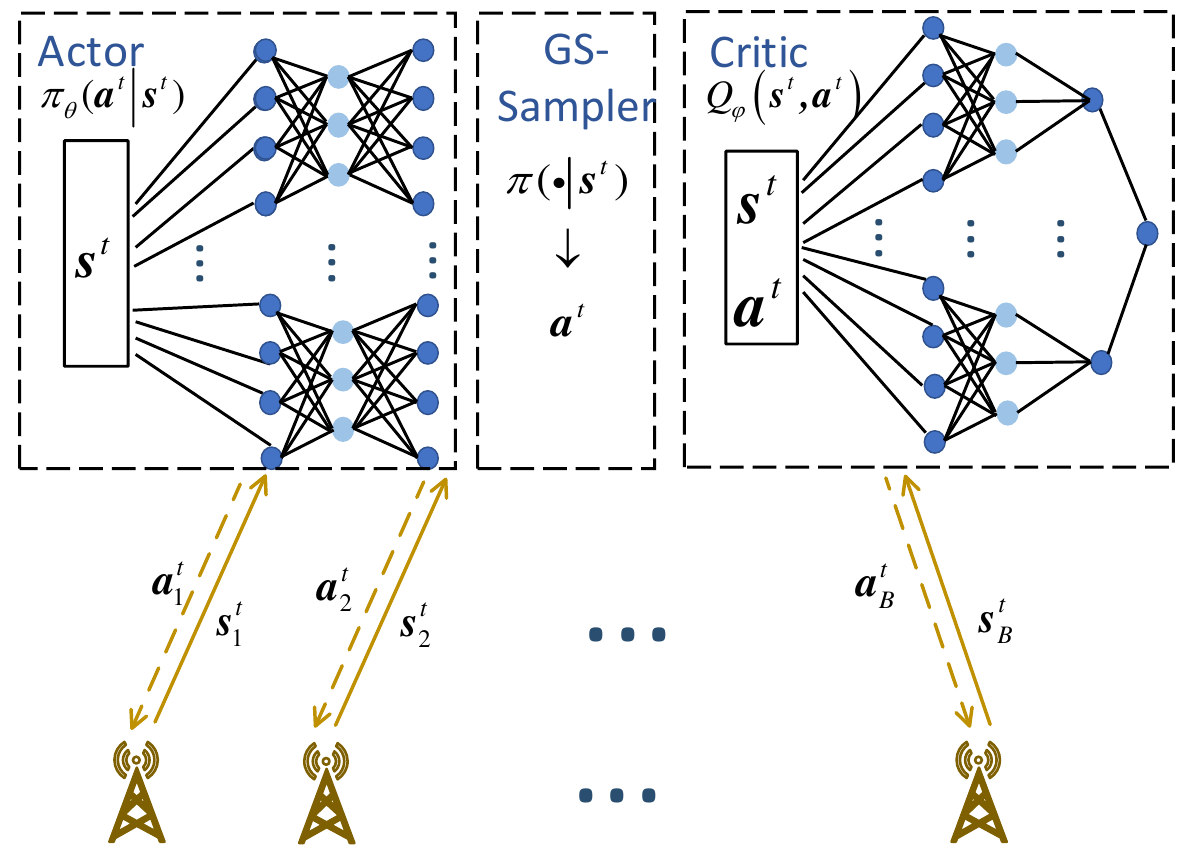}
%   \caption{Proposed centralized scheme. }
%   \label{diagram:madsac_cc}
% \end{figure} 
% \begin{figure}[h]
%   \centering
%   \includegraphics[scale=0.8]{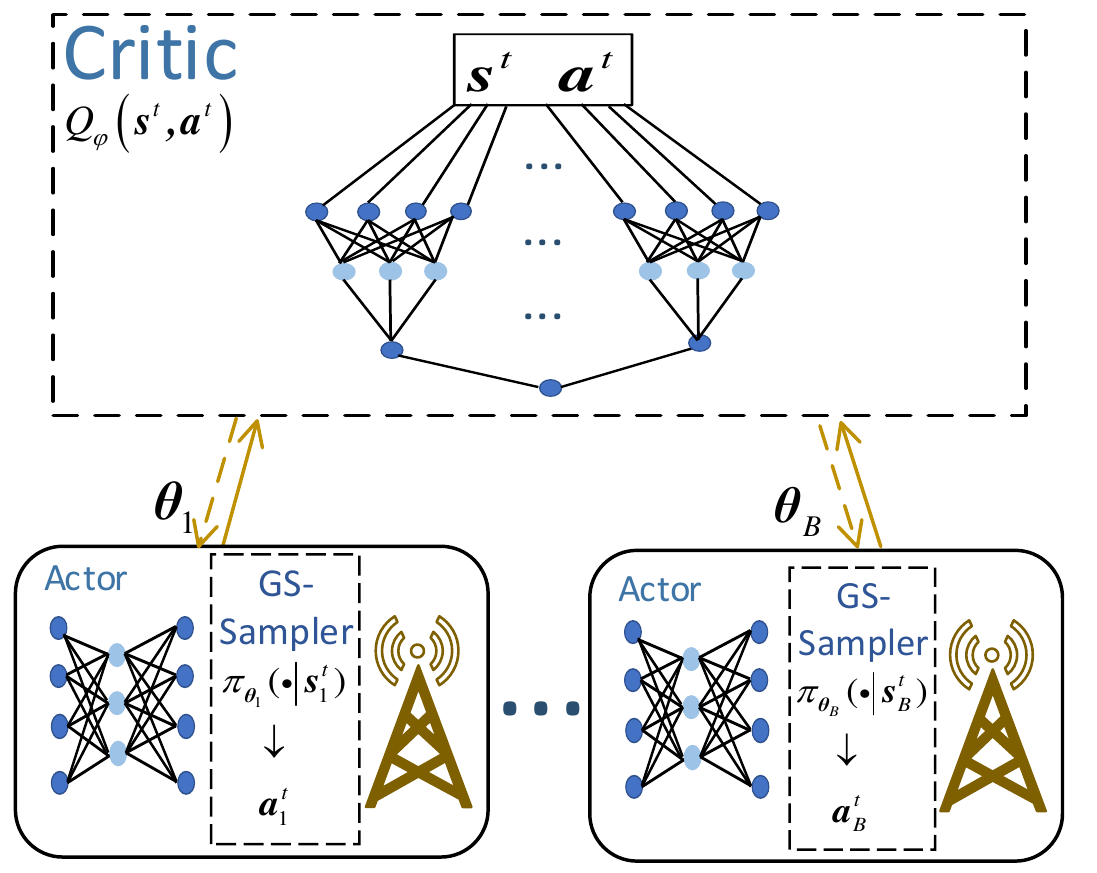}
%   \caption{Proposed decentralized scheme. }
%   \label{diagram:madsac_dc}
% \end{figure} 
\begin{figure}


\begin{subfigure}[t]{.495\linewidth}
  \centering
  \includegraphics[scale=0.7]{MASAC_CC.pdf}
  \caption{Proposed centralized control. }
  \label{diagram:madsac_cc}
\end{subfigure}
\begin{subfigure}[t]{.495\linewidth}
  \centering
  \includegraphics[scale=0.7]{MASAC_DC.pdf}
  \caption{Proposed decentralized control. }
  \label{diagram:madsac_dc}
\end{subfigure}
\caption{Diagrams of algorithm implementations.}
\end{figure}

% \begin{Remark}
% Although prior study in \cite{christodoulou2019soft} have extended the SAC to discrete settings, it suffers from curse of dimensionality when handling multi-agent tasks. This method explicitly calculates expectation (e.g., \eqref{eq:lossp}), where computational complexity is $\mc O({(F+1)}^B)$ compared with the proposed approach, i.e., using samples to approximate the expectation calculation. 
% % and involves an exponentially large number of calculation, 
% % which is not suitably applied in multi-agent tasks. 
% % In the proposed MADSAC, we adopt {\it reparameterization trick}, and customize the {\it Gumbel-SoftMax} in \cite{jang2016categorical} into multi-agent discrete decision-making, which assists to significantly reduce computational complexity and space complexity. 	
% \end{Remark}

\begin{Remark}
Lemma \ref{lemma:gs} and Corollary \ref{coro:1} lay the foundations of the proposed approach. As such, the proposed MADSAC can circumvent curse of dimensionality because of the utilization of GS-sampler. Particularly, in the considered problem, the output dimension of the proposed approach is $B(F+1)$, linearly increasing versus the number of ENs or IoT sensors. 
In addition, benefiting from the entropy-regularized formalism, MADSAC has high sample efficiency that is expected to converge faster than DQN and other conventional AC-based algorithms \cite{haarnoja2018soft,sutton2018reinforcement}. 
\end{Remark}

\begin{algorithm}[!t]
  \caption{Centralized Multi-Agent Discrete SAC-Based Cache Update}\label{alg:iotcentralized}
  \begin{algorithmic}[1]
    \State Initialize soft Q-function parameters $\bphi_1, \bphi_2$ 
    % \State Initialize experience memory $ EM $
    \State Initialize policy parameter $\bth$ 
    \State Initialize parameters of target networks $\bphi_i^- \leftarrow \bphi_i$, $i = 1,2 $ 
    \State Initialize {\it RB}
    % \State Initialize the parameter of target ADQN  $ \bm{\theta^- }=\bm{\theta } $ 
    \For{$ t=0,1,2, \cdots $}
    		\State Observe $\bm s$ and take action $\bm a \sim \pi_{\bth} (\cdot|\bm s)$  
    		\State Observe $\bm s'$ and $r$
    		\State Store $\xi = (\bm s, \bm a, r, \bm s')$ in {\it RB}

    \Procedure{TrainMADSAC}{}
    \State Randomly draw a batch of $N$ experiences as $\Xi_N$
    \For{each $\xi = \left( \bm s, \bm a, r, \bm s' \right) \in \Xi_N $}
     \State Calculate target values: 
     $ y_{\xi} = r + \gamma ( \min_{i = 1,2} Q_{\bphi_i^-} (\bm s', \bm a') - \alpha \log\pi_{\bth}(\bm a'|\bm s')) $, where $\bm a' = GS(\pi_{\bth}(\cdot|\bm s'))$
    \EndFor
    \State Update $\bphi_i$ by taking one-step gradient descent of
    $\frac{1}{N} \sum_{\xi \in \Xi_N} ( Q_{\bphi_i}(\bm s, \bm a) - y_{\xi})^2$, for $i = 1,2 $
   	\State Update $\bth$ by taking one-step gradient descent of 
   	\State $\frac{1}{N} \sum_{\xi \in \Xi_N} \big( \min_{i = 1,2} Q_{\bphi_i}( \bm s, GS(\pi_{\bth}(\cdot|\bm s) )) - \alpha \log \pi_{\bth} (GS(\pi_{\bth}(\cdot|\bm s) )|\bm s ) \big) $ 
   	\State Update temperature $\alpha$ by taking one-step gradient descent of 
   	\State $ - \frac{1}{N} \sum_{\xi \in \Xi_N} \alpha \big(\log \pi_{\bth}(GS(\pi_{\bth} (\cdot |\bm s))| \bm s) - \bar H \big)$
   	\State Update parameters of target networks 
   	 $\bphi_i^{-} \leftarrow \tau \bphi_i + (1- \tau) \bphi_i^{-} $, for $i = 1,2$
    \EndProcedure
    \EndFor 
  \end{algorithmic}
\end{algorithm}

\subsection{Algorithm Implementation}
We present the centralized algorithm implementation for cache update in IoT sensing network, i.e., MADSAC-centralized control (MADSAC-CC). 
A workflow is illustrated in Fig. \ref{diagram:madsac_cc}. 
Particularly, the CP plays a role of the centralized agent and trains a centralized policy for all ENs. Accordingly, each EN should first transfer its local observations (e.g., $\bm s_b^t, \forall b \in \mc B$) to the CP every epoch. After collecting local observations, the CP returns local actions to each EN separately (e.g., $\bm a_b^t, \forall b \in \mc B$). Hereunder, we describe network design and algorithm implementation. 

{\it Network Design:} We adopt a clipped double Q-learning approach, where two separate Q-networks are concurrently trained, and the minimum of the outputs of two networks is the estimate of the soft Q-function. This approach assists to overcome overestimation \cite{fujimoto2018addressing}. We denote the parameterized Q-functions by $Q_{\bphi_1}(\bm s, \bm a)$ and $Q_{\bphi_2} (\bm s, \bm a)$. Note that, the joint action $\bm a$ consists of a number of $B$ one-hot vectors. 
We further maintain a policy network to output a $B(F+1)$-dimensional vector, where each element respectively corresponds to Pr$\{\bm a_b|\bm s^t\}, \forall \bm a_b \in \mc A_b, \forall b \in \mc B$. 
% In this way, $\pi_{\bth} (\bm a|\bm s) \triangleq \Pi_{b\in \mc B} ~\mu_{\bth}(\bm a_b|\bm s^t)$. 

{\it Algorithm Training:} The {\it RB} is supposed to have a finite capacity and store the latest experiences $\xi^t$. At every step, we randomly draw a mini-batch of $N$ samples to approximate loss terms. We present the concrete steps in Algorithm \ref{alg:iotcentralized}. Particularly, the target networks are slowly updated by {\it soft copy} to stabilize training. That is, the constant $\tau$ in step 19 in Algorithm \ref{alg:iotcentralized} is a small positive value. All  parameters, i.e., $\{\bphi_1, \bphi_2, \bth, \alpha\}$, are trained by using stochastic gradient descent with proper learning rates. 

It is worth mentioning that, in the training procedure, the CP needs to concurrently train Q-networks and policy network. When policy network is well-tuned, it can be solely utilized to select actions for all ENs in the evaluation procedure. However, the centralized control requires the CP to first aggregate local observations at ENs and then distribute caching decisions back to ENs at every epoch, which inevitably introduces very high communication overhead. 
% Although MADSAC-CC has a low output size, the centralized scheme, however, introduces high communication overhead, since the centralized policy requires information aggregation and decision distribution via fronthaul during execution. 

\section{Decentralized Multi-Agent Discrete Soft Actor-Critic-Based Caching}
To reduce communication overhead suffered from the centralized control, we develop a decentralized MADSAC-based caching scheme. Particularly, we devise decentralized policies for each agent to locally generate caching decisions, while utilizing the centralized soft-Q function to globally optimize these decentralized policies.

\subsection{Proposed Decentralized Multi-Agent Discrete Soft Actor-Critic Learning}
To generalize the proposed discrete variant of the SAC into the decentralized control, we maintain $B$ parameterized stochastic polices for each agent, i.e., $\pi_{\bth_b}(\cdot |\bm s_b)$, where $\bth_b$ denotes the parameter of the corresponding function approximator, $\forall b \in \mc B$. According to the principle of the SAC, the decentralized MADSAC learning attempts to optimize the following entropy-regularized problem:
\begin{align}
	\max_{\{\bth_b\}} ~ \mathbb{E}_{\{\bm s_b^t\}, \{\bm a_b^t\}} \left[ \sum_{t=0}^{+\infty}(\gamma)^{t}  \bigg(r^{t+1} + \alpha \sum_{b \in \mc B} \mc H(\pi_{\bth_b}(\cdot|\bm s_b^t)) \bigg) \right]. \label{eq:masac} 
\end{align}
Thus, the centralized soft-Q function can be defined as:
\begin{align}
	Q(\{\bm s_b\}, \{\bm a_b\}) = \mathbb{E} \left[ V^t + \sum_{\tau=1}^{+\infty}\sum_{b\in \mc B} (\gamma)^{\tau} \alpha \mc H(\pi_{\bth_b}(\cdot|\bm s_b^{t+\tau}))  \big|\bm s_b^t =  \bm s_b, \bm a_b^t = \bm a_b, \forall b \in \mc B \right].
\end{align}
Accordingly, we further maintain a parameterized soft Q-function $Q_{\bphi} (\{\bm s_b\}, \{\bm a_b\})$, which can be applied to refine local (decentralized) policies. As such, policy parameters $\{\bth_b\}$ can be trained by minimizing the following loss function:
\begin{align}
 	J_\pi (\{\bth_b\}) = \mathbb E_{\{\bm s_b^t\} \sim \Xi} \left [  \mathbb E_{\bm a_b \sim \pi_{\bth_b}} \bigg[ \sum_{b \in \mc B} \alpha \log(\pi_{\bth_b} (\bm a_b| \bm s_b^t))  - Q_{\bphi} \big(\{\bm s_b^t\}, \{\bm a_b\}\big)\bigg]\right]. \label{eq:malossp}
\end{align}
Regarding the update of $\bphi$, we follow the centralized MADSAC step and minimize the following loss:
\begin{align}
	J_Q(\bphi) = \mathbb{E}_{\xi^t \sim \Xi} \left[ \left( Q_{\bphi}\big(\{\bm s_b^t\}, \{\bm a_b^t\}\big) -  (r^{t+1} + \gamma \bar V (\{\bm s_b^{t+1}\}) )\right) ^2 \right], \label{eq:malossq}
 \end{align}
where the target value $\bar V(\{\bm s^{t+1}\})$ is given by:
\begin{align}
	\mathbb E_{\bm a_b \sim \pi_{\bth_b} } \left[ Q_{\bphi^-} (\{\bm s^{t+1}\}, \{\bm a_b\}) - \alpha \sum_{b \in \mc B} \log \pi_{\bth_b} (\bm a_b|\bm s_b^{t+1}) \right], \label{eq:matargetq}
\end{align}
and $Q_{\bphi^-}(\{\bm s_b\}, \{\bm a_b\})$ is the target network and its parameter $\bphi^-$ is slowly updated every epoch. Last, temperature parameter $\alpha$ can be updated similarly to \eqref{eq:losstem}. Likewise, the expectation in \eqref{eq:malossp}-\eqref{eq:matargetq} can be approximated by using samples drawn from the {\it RB} or produced by the GS-sampler. We present the algorithm implementation of MADSAC-decentralzed control (MADSAC-DC) in the ensuing subsection.

\begin{algorithm}[!t]
  \caption{Decentralized Multi-Agent Discrete SAC-Based Cache Update}\label{alg:iotdecentralized}
  \begin{algorithmic}[1]
    \State Initialize centralized soft Q-function parameters $\bphi_1, \bphi_2$ 
    % \State Initialize experience memory $ EM $
    \State Initialize policy parameters $\{\bth_b\}$ 
    \State Initialize parameters of target networks $\bphi_i^- \leftarrow \bphi_i$, $i = 1,2 $ 
    \State Initialize {\it RB}
    % \State Initialize the parameter of target ADQN  $ \bm{\theta^- }=\bm{\theta } $ 
    \For{$ t=0,1,2, \cdots $}
    		\State Observe $\bm s_b$ and take action $\bm a_b \sim \pi_{\bth_b} (\cdot|\bm s_b)$ for $b \in \mc B$ 
    		\State Observe $\bm s_b'$ for $b \in \mc B$ and reward $r$
    		\State Store $\xi = \big(\{\bm s_b\}, \{\bm a_b\}, r, \{\bm s_b'\}\big)$ in {\it RB}

    \Procedure{TrainMADSAC}{}
    \State Randomly sample a batch of $N$ experiences as $\Xi_N$
    \State Update $\{\bth_b\}$ by taking one-step gradient descent of $J_{\pi}(\{\bth_b\})$ 
    \State Update $\bphi$ by taking one-step gradient descent of $J_Q(\bphi)$
    \State Update $\alpha$ by taking one-step gradient descent of $J(\alpha)$ 
    % \For{each $\xi = \left( \{\bm s\}, \{\bm a\}, r, \{\bm s' \right)\} \in \Xi_N $}
    %  \State Calculate target values: 
    %  $ y_{\xi} = r + \gamma ( \min_{i = 1,2} Q_{\bphi_i^-} (\bm s', \bm a') - \alpha \log\pi_{\bth}(\bm a'|\bm s')) $, where $\bm a' = GS(\pi_{\bth}(\cdot|\bm s'))$
    % \EndFor
    % \State Update $\bphi_i$ by taking one-step gradient descent of
    % $\frac{1}{N} \sum_{\xi \in \Xi_N} ( Q_{\bphi_i}(\bm s, \bm a) - y_{\xi})^2$, for $i = 1,2 $
   	% \State Update $\bth$ by taking one-step gradient descent of 
   	% \State $\frac{1}{N} \sum_{\xi \in \Xi_N} \big( \min_{i = 1,2} Q_{\bphi_i}( \bm s, GS(\pi_{\bth}(\cdot|\bm s) ) - \alpha \log \pi_{\bth} (GS(\pi_{\bth}(\cdot|\bm s), \bm s ) ) \big) $ 
   	% \State Update temperature $\alpha$ by taking one-step gradient descent of 
   	% \State $ - \frac{1}{N} \sum_{\xi \in \Xi_N} \alpha \big(\log \pi_{\bth}(GS(\pi_{\bth} (\cdot |\bm s))| \bm s) - \bar H \big)$
   	\State Update parameters of target networks
   	 $\bphi_i^{-} \leftarrow \tau \bphi_i + (1- \tau) \bphi_i^{-} $, for $i = 1,2$
    \EndProcedure
    \EndFor 
  \end{algorithmic}
\end{algorithm}
\subsection{Algorithm Implementation}
As illustrated in Fig. \ref{diagram:madsac_dc}, the proposed MADSAC-DC operates as follows: in the training procedure, each EN needs to upload local observations to the CP, which then globally optimizes policy parameters. Again, we adopt the technique of the clipped double Q-learning, and maintain two Q-network to combat overestimation. The training steps are summarized in Algorithm \ref{alg:iotdecentralized}. Finally, the CP should deliver the well-trained policy parameters to each EN, which thereafter is able to produce local actions based on its observation only.

\begin{Remark}
	Clearly, in the training procedure, the decentralized control suffers the same amount of communication overhead as that of MADSAC-CC, i.e., $\mc O(B^2F + B)$, due to information aggregation of user requests (e.g., $\{N_{f,b}^t\}$) and distribution of local actions (e.g., $\{\bm a_b^t\}$). When it goes into the evaluation procedure, each EN can independently make decisions with null communication overhead while the communication overhead of MADSAC-CC remains to be $\mc O(B^2F + B)$. 
\end{Remark}
\section{Performance Evaluation}
In this section, we evaluate the performance of the proposed algorithms under various kinds of scenarios. Unless otherwise stated, the default setting is as follows: three ENs are considered in an IoT sensing network; each EN has a communication range of 100 m and can coordinate 10 randomly distributed  sensors; the storage of every content item is randomly generated within $[0.05, 0.1]$ GB. The range of the AoI is $[1, 50]$. Concerning communications between IoT sensors and ENs, the transmission power is 20 dBm at each sensor; the path loss is $- (148.1 + 37.6 \log_{10} d)$ dB with $d$ being the distance in km; the channel bandwidth is 10 MHz; the antenna gain is 10 dBi; the log-normal shadowing parameter is 8 dB; and the received SNR threshold is specified by $\eta_{th} = 10$ dB. Furthermore, the space-time popularity dynamics of user requests are modeled as follows: user requests at distinct ENs exhibit individual content popularity distributions; at each EN, there are at most 100 users making requests according to a class of Zipf distributions  \cite{sadeghi2018optimal}, namely, 
$
	p_{f,b} = \zeta_{f,b}^{-\upsilon_{b}}/\textstyle \sum_{f' \in \mc F} \zeta_{f',b}^{-\upsilon_{b}}, \forall f, b,  
$
where $\upsilon_{b}$ denotes the skewness factor that is selected from $\{0.5, 1, 1.5, 2\}$, and $\{\zeta_{f,b}\}$ denote rank orders of content items that are dynamically evolving by following certain transition probability matrix  \cite{sadeghi2018optimal}. In addition, we consider $\omega_1 = \omega_2  = 1$.

In the subsequent subsections, we consider the following algorithms for comparison:
\begin{itemize}
	\item {\bf DQN:} This algorithm is widely used in prior works (e.g., \cite{abd2020reinforcement,xu2020aoi}) to handle update at a single EN. It is expected to obtain near-optimal results in small-scale settings, which attempts to validate the effectiveness of the proposed algorithm. 
	\item {\bf AC:}  We consider a popular AC-based algorithm in \cite{Lillicrap2015ContinuousCW}. To apply it in the discrete decision-making, we again adopt GS-sampler and recast the output into low dimension, similar to the proposed algorithms. This approach is utilized to evaluate the potentials of entropy regularization. 
\end{itemize}
The above-mentioned algorithms are assumed to operate in the centralized manner, similar to MADSAC-CC. 

\subsection{Learning Curves of the Proposed Algorithms}
 We design either Q-networks or policy networks as five-layer neural networks, which consist of an input layer, three hidden layer, and an output layer, respectively. 
 To stabilize training, the {\it polynomial learning rate} policy is adopted to train networks (readers are referred to \cite{mishra2019polynomial} in detail.). We summarize key parameters of algorithm implementation in Table \ref{table:implementation}. To ensure fairness, all algorithms are implemented by the same configuration.

We illustrate the learning curves of all the algorithms in Fig. \ref{fig:lcb1} and \ref{fig:lcb3}, respectively. Specifically, the algorithm performance is presented in terms of average reward (shown by learning curves) and standard deviation (shown by shaded areas). All of the results are obtained by applying the moving average, i.e., $\sum_{\tau = t-T+1}^{t} r^{\tau}$, where $T = 5000$. In Fig. \ref{fig:lcb1}, we consider a single agent setting (i.e., EN) to validate the effectiveness of the proposed discrete variant of SAC. Clearly, the proposed MADSAC converges very fast and achieves comparable final results as that of DQN, whilst AC takes a much longer while to converge. Moreover, we can observe some sudden drops in the curve of AC, whereas the curve of the proposed one is generally flat. The observation indicates that the proposed approach is able to learn more stably.  
We further consider a multi-agent setting with three ENs in Fig. \ref{fig:lcb3}. Evidently, the proposed algorithms outperform DQN and AC. At the initial stage, the rewards of MADSAC-CC and MADSAC-DC increase faster than AC, and shortly converge to almost the same level. We can observe notable gaps between final results of the proposed algorithms and AC. This finding implies that the entropy-regularized objective in the proposed approach is able to circumvent premature convergence somehow. However, DQN fails to make meaningful progress in the multi-agent setting. The reason for this is that an explosive action space makes DQN difficult to estimate the values of the Q-function. 
These results confirm the remarkable performance of the proposed algorithms in terms of convergence speed and final performance, favorably to the results obtained in \cite{haarnoja2018soft}. 

\begin{table}[h!]
\centering
\caption{Parameters for algorithm implementations}
\begin{tabular}{c c}
 \hline
  Parameters & Value\\
  \hline
  Number of neurons in each hidden layer  &128  \\
  Optimizer & {\it Adam}\\
  Initial learning rate for Q-networks & 0.01\\
  Initial learning rate for policy networks & 0.001\\
  Power factor for decreasing learning rates & 0.9\\
  Memory capacity of {\it RB} & 5000\\
  Mini-batch size & 100\\
  Step size for updating target networks & 0.001\\
  Discount factor & 0.99\\
 \hline
\end{tabular}
\label{table:implementation}
\end{table}

\begin{figure}
\begin{subfigure}[t]{.495\linewidth}
  \centering
  \includegraphics[scale=0.42]{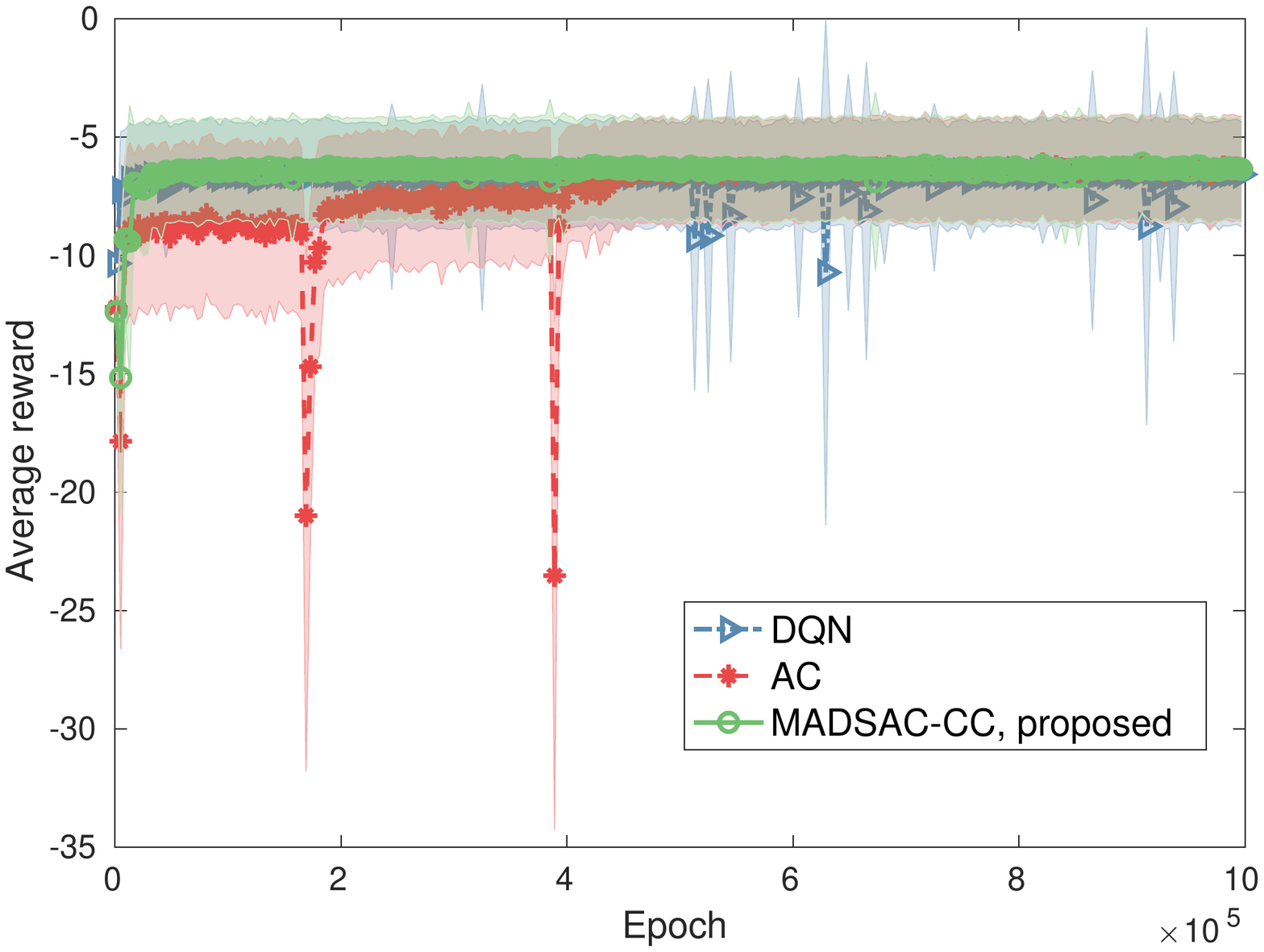}
  \caption{Single agent}
  \label{fig:lcb1}
\end{subfigure}
\begin{subfigure}[t]{.495\linewidth}
  \centering
  \includegraphics[scale=0.42]{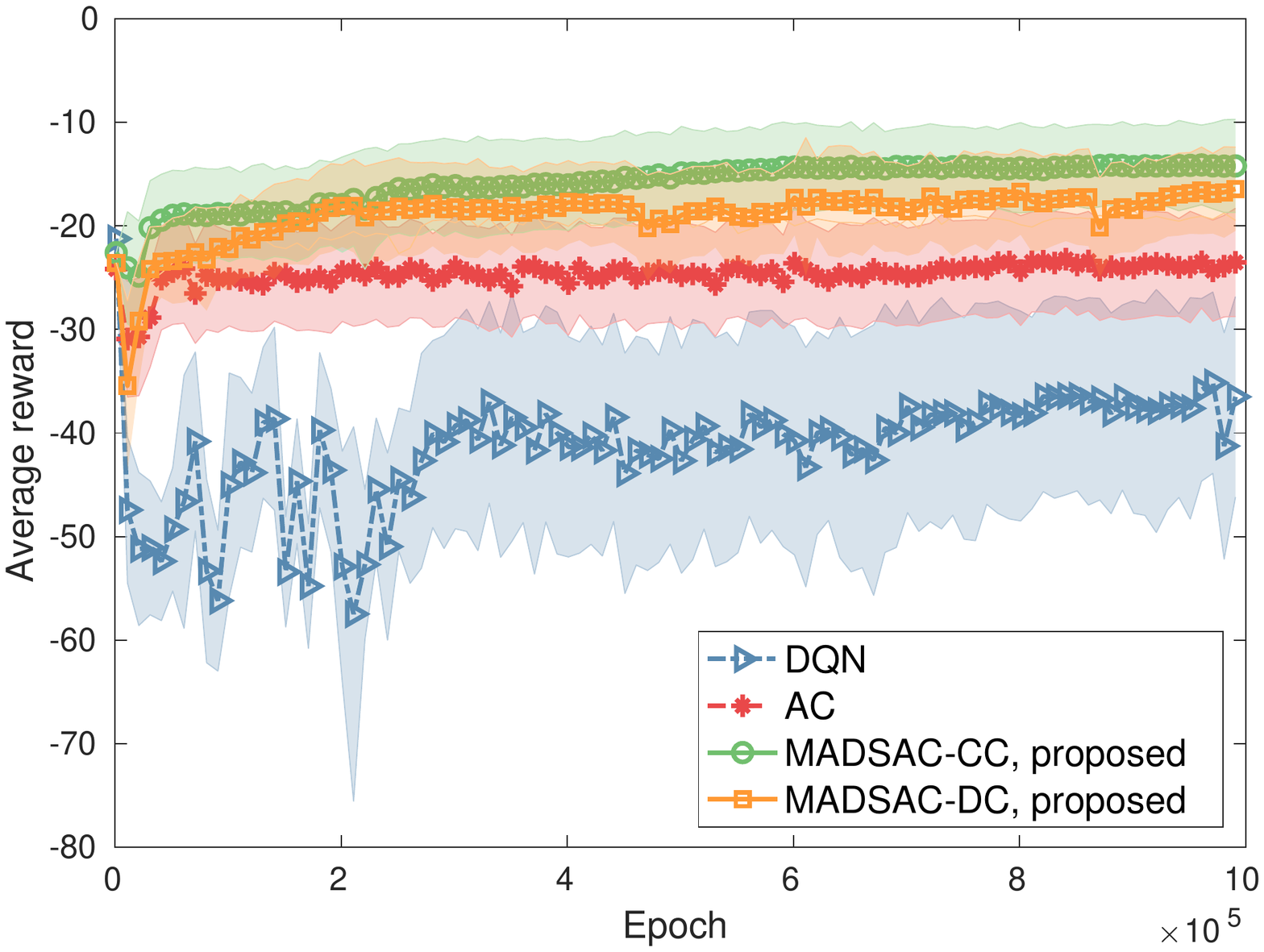}
  \caption{Multiple agents. }
  \label{fig:lcb3}
\end{subfigure}
\caption{Learning curves.}
\end{figure}

\subsection{Scalability}
In this subsection, we investigate the impacts of system parameters and study the scalability of the proposed algorithms. To what follows, we use the average weighted cost as the performance criterion, i.e., defined in \eqref{eq:cost}. All of the results are obtained by averaging over 10000 epochs after DNNs are well-tuned. 

Particularly, we first vary the number of agents (e.g., ENs) and plot the results in Fig. \ref{fig:en}. As can be observed, the proposed algorithms achieve almost comparable results, when the number of agents is no larger than five. This is because the utilization of the centralized critic in  MADSAC-DC can effectively criticize the decisions of each local agent. Thus, it assists to agent collaboration and leads to globally optimized decisions. As the number of agents becomes large, the performance of the proposed decentralized algorithm gradually degrades due to the restriction of local observations. Nevertheless, over the entire horizontal axis, the decentralized algorithm always obtains better results than the AC-based scheme that operates in the centralized manner. When nine ENs are deployed, MADSAC-CC and MADSAC-DC can reduce 42.44\% and 27.51\% of the average weighted cost, respectively, in comparison to AC. It worth mentioning that, when five or more ENs are considered, it is not practical to implement DQN due to the extremely large number of actions. Moreover, we illustrate the results of each considered metric (achieved by MADSAC-CC) in Fig. \ref{fig:en2}. As can be seen, the average AoI becomes larger where more ENs are available. The reason is that more content items are involved with the growing number of ENs. Interestingly, transmission energy consumption and fronthaul traffic loads witness increase trends when the number of agents are less than five but degrades a little bit afterwards. Our conjecture is that the decrease in energetic cost and traffic loads is as a result of worsening data freshness. 

To further investigate the scalability, we carry out experiments by changing the number of sensors within the communication range of each EN. As shown in Fig. \ref{fig:file}, the weighted cost, achieved by the decentralized design, is quite close to that of the centralized one. This observation further corroborates the remarkable performance of the decentralized control. As anticipated, MADSAC-CC and MADSAC-DC achieve much lower average weighted costs than DQN. Specifically, when $20$ sensors are deployed at the coverage of each EN,  the proposed DRL schemes are able to reduce the weighted cost by 54.23\%, 50.94\%, respectively, compared with the DQN-based scheme. Notably, when 25 sensors are considered, the resulting number of discrete actions is 17576, making DQN implementation impossible. Similarly, the proposed DRL schemes outperform the AC-based scheme over the entire horizontal axis.
However, we should mention that the utilization of entropy-regularization in the proposed algorithms does not alway have a remarkable advantage over the conventional RL. The AC based scheme sometimes achieves comparable performance as the proposed ones, e.g., in the case of 15 sensors.
% Although the AC has a relatively low sample efficiency, it sometimes converges to comparable levels as the SAC, e.g., in the case of 15 sensors. 
We further show the results of each performance criterion (achieved by MADSAC-CC) in Fig. \ref{fig:files}. The average AoI gradually becomes large because of the enlargement of content  catalog. Average transmission energy and traffic loads exhibit similar results to what has been found in Fig. \ref{fig:en2}.
\begin{figure*}
\begin{subfigure}[t]{.495\linewidth}
  \centering
  \includegraphics[scale=0.43]{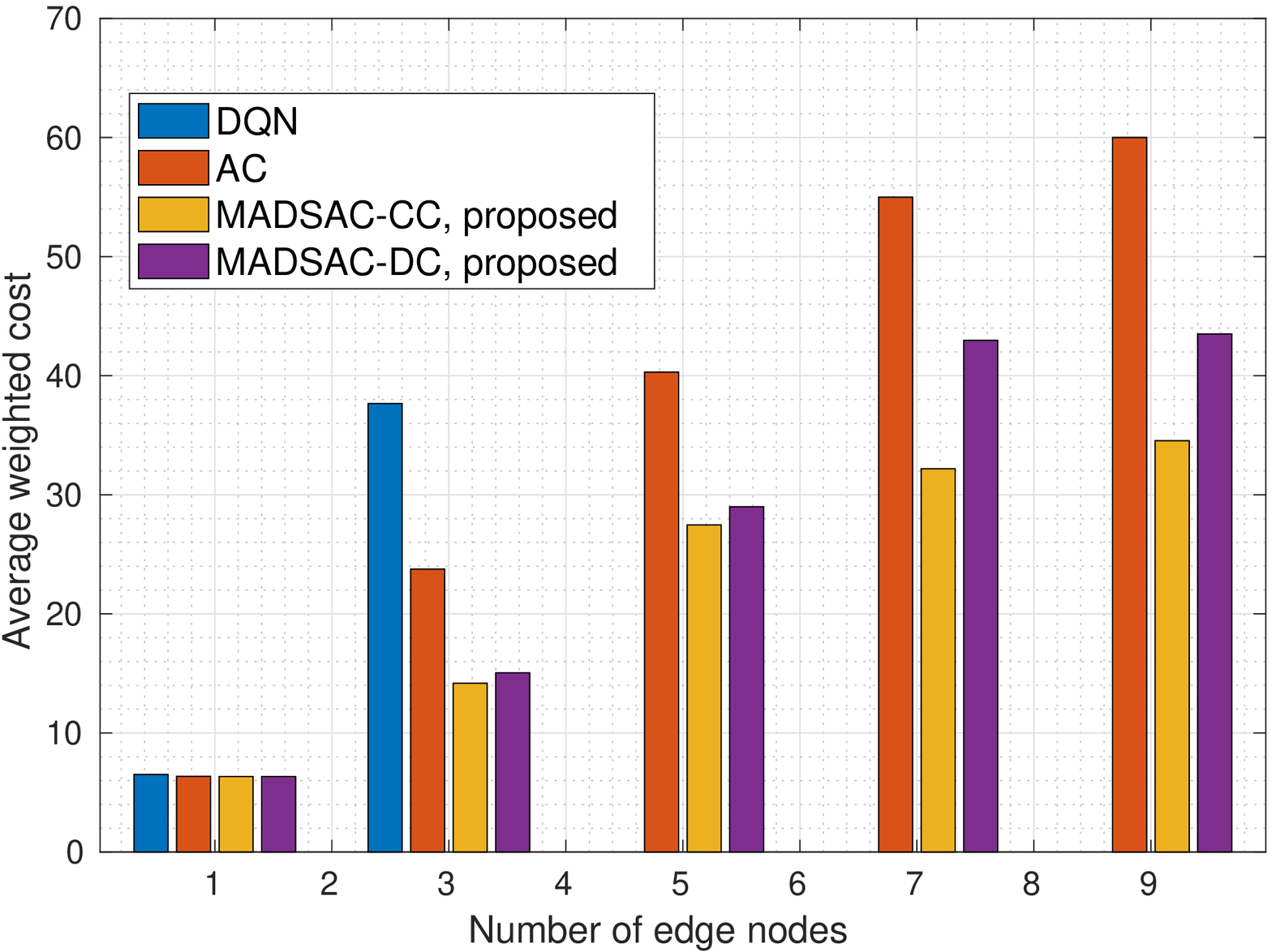}
  \caption{Average weighted cost. }
  \label{fig:en}
\end{subfigure}
\begin{subfigure}[t]{.495\linewidth}
  \centering
  \includegraphics[scale=0.42]{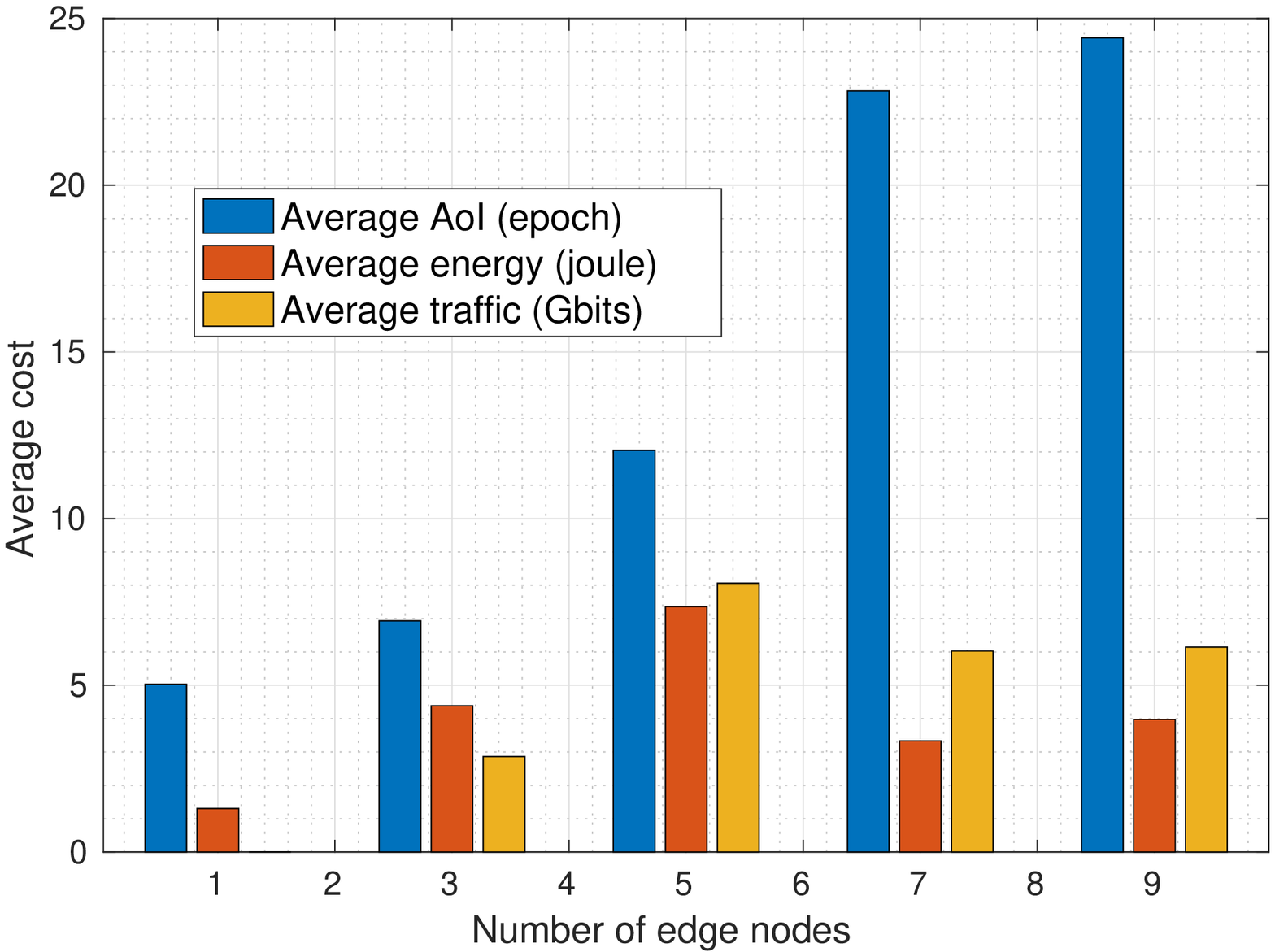}
  \caption{Individual performance criterion. }
  \label{fig:en2}
\end{subfigure}
  \caption{Impacts of the number of ENs. }
\end{figure*}

\begin{figure*}
\begin{subfigure}[t]{.495\linewidth}
  \centering
  \includegraphics[scale=0.42]{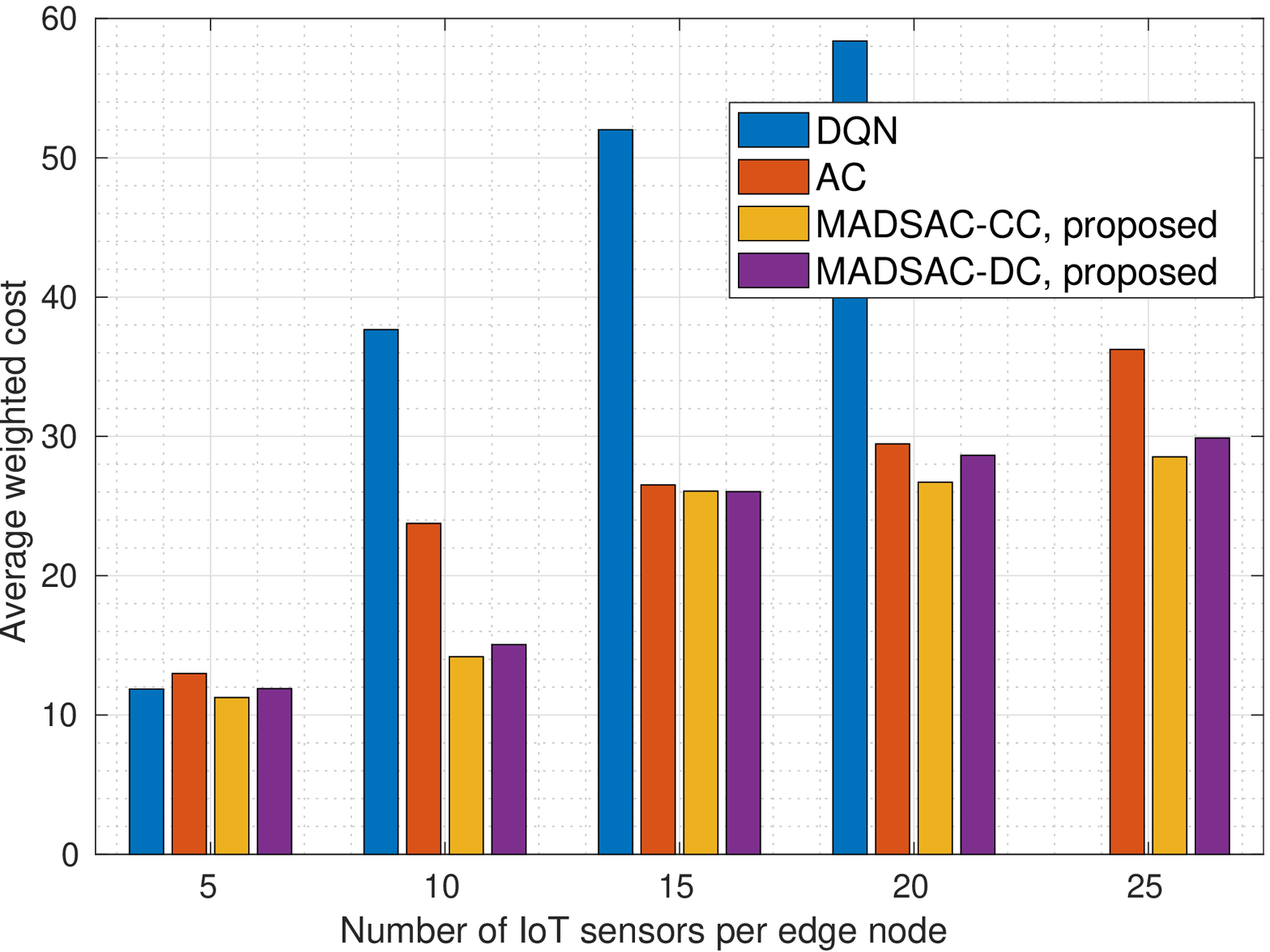}
  \caption{Average weighted cost. }
  \label{fig:file}
\end{subfigure}
\begin{subfigure}[t]{.495\linewidth}
  \centering
  \includegraphics[scale=0.42]{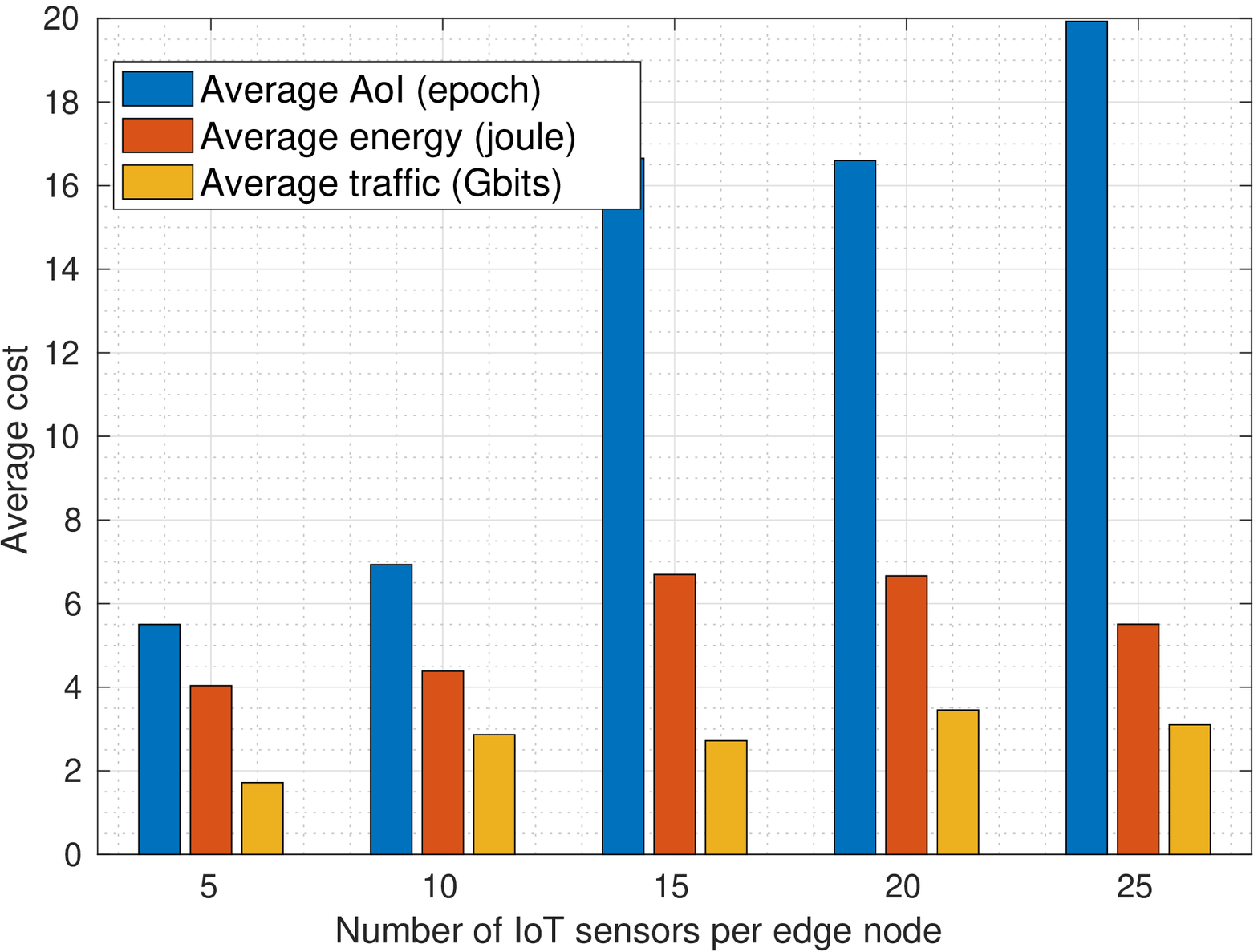}
  \caption{Individual performance criterion.}
    \label{fig:files}
\end{subfigure}
\caption{Impacts of the number of IoT sensors at each EN.}
\end{figure*}

The above-mentioned simulation results confirm the superiority of the proposed discrete variant of SAC, and the generalization of the decentralized approach. In the ensuing section, we only implement MADSAC-CC and focus on the tradeoff among the considered performance criteria. 
\begin{figure}
\begin{minipage}[t]{.495\linewidth}
  \centering
  \includegraphics[scale=0.42]{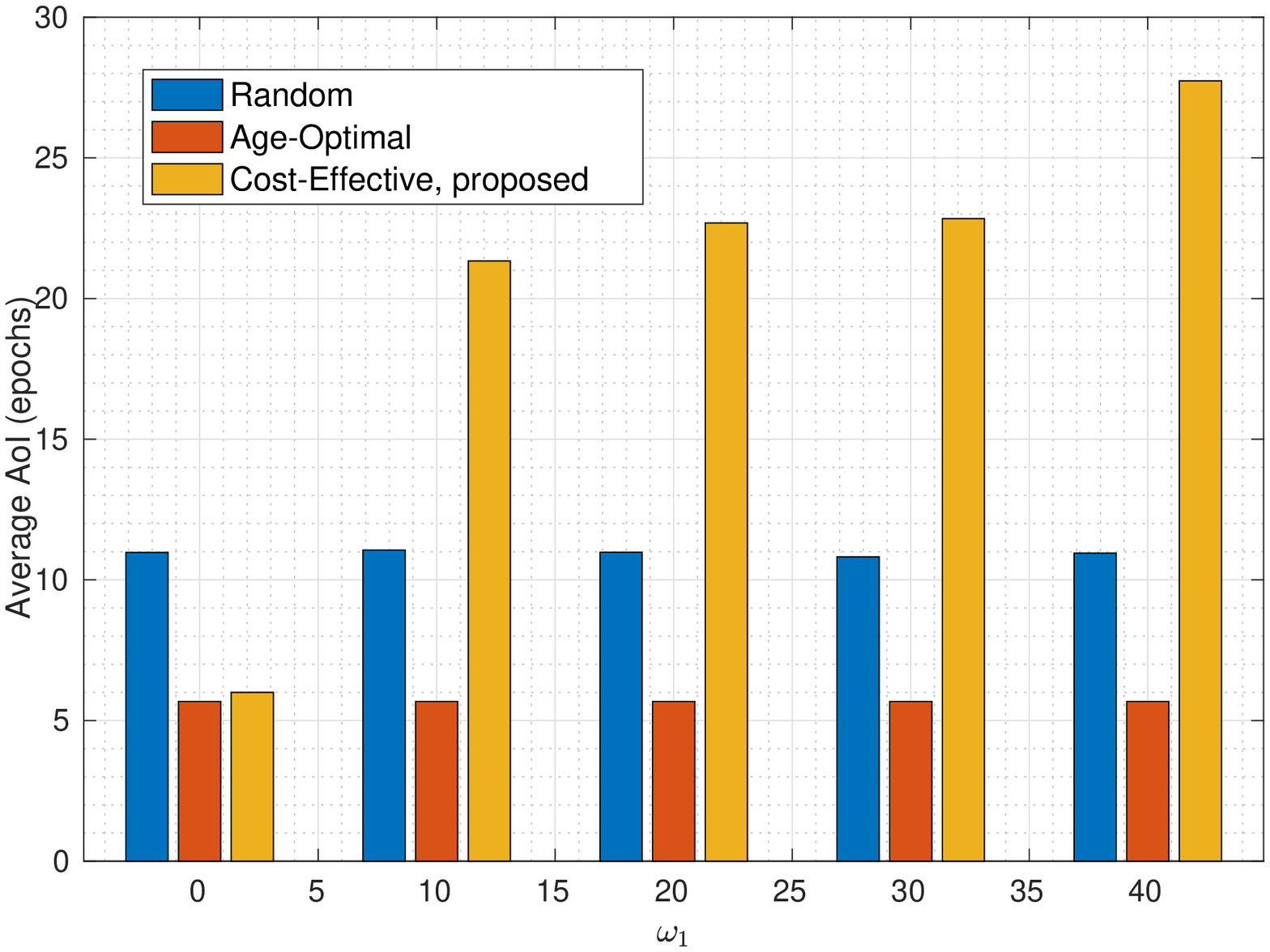}
  \caption{Average AoI versus $\omega_1$.}
  \label{fig:eta1aoi}
\end{minipage}
\begin{minipage}[t]{.495\linewidth}
  \centering
  \includegraphics[scale=0.42]{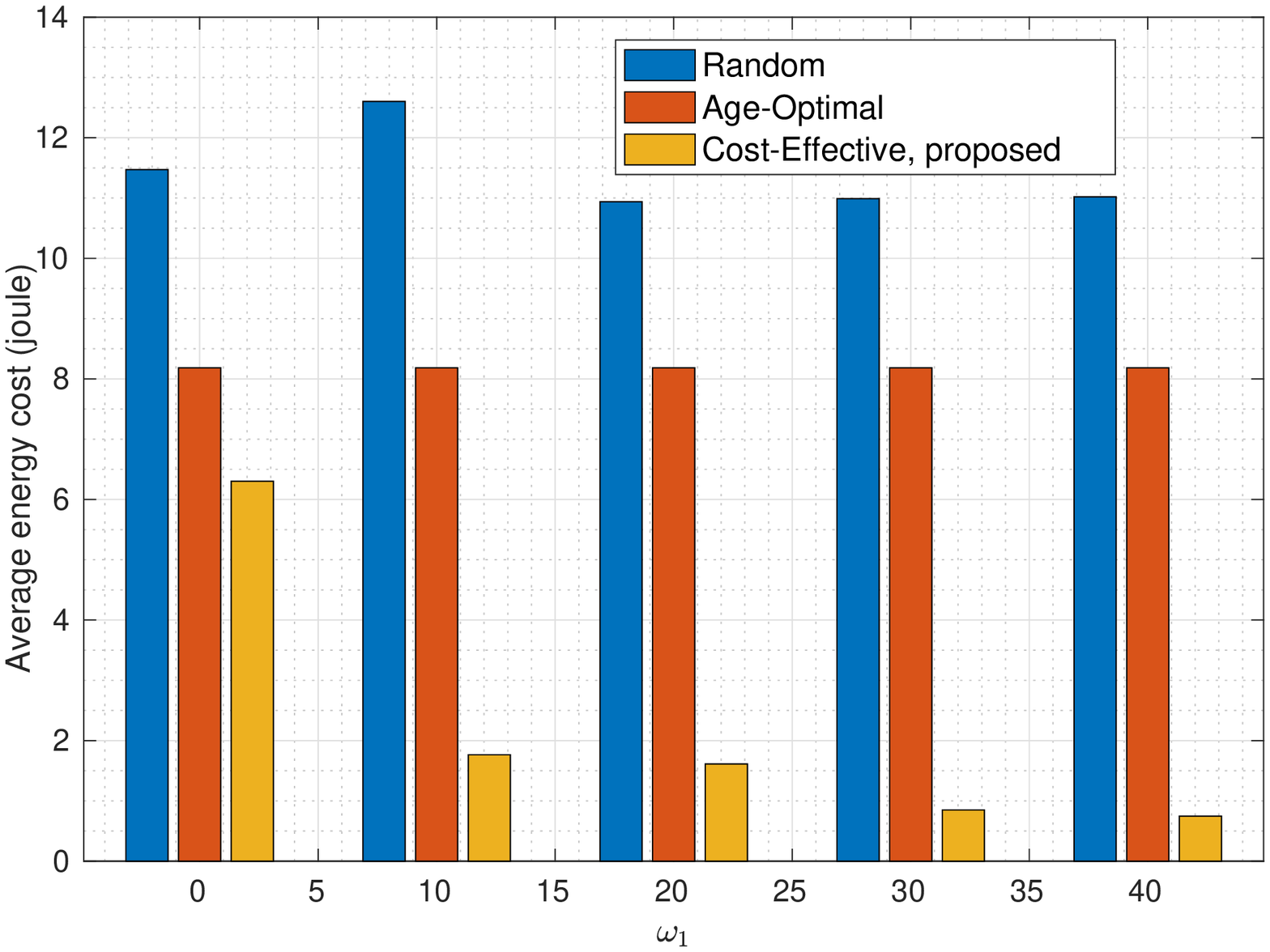}
  \caption{Average transmission energy consumption versus $\omega_1$. }
  \label{fig:eta1egy}
\end{minipage}
\end{figure}

\begin{figure}
\begin{minipage}[t]{.495\linewidth}
  \centering
  \includegraphics[scale=0.42]{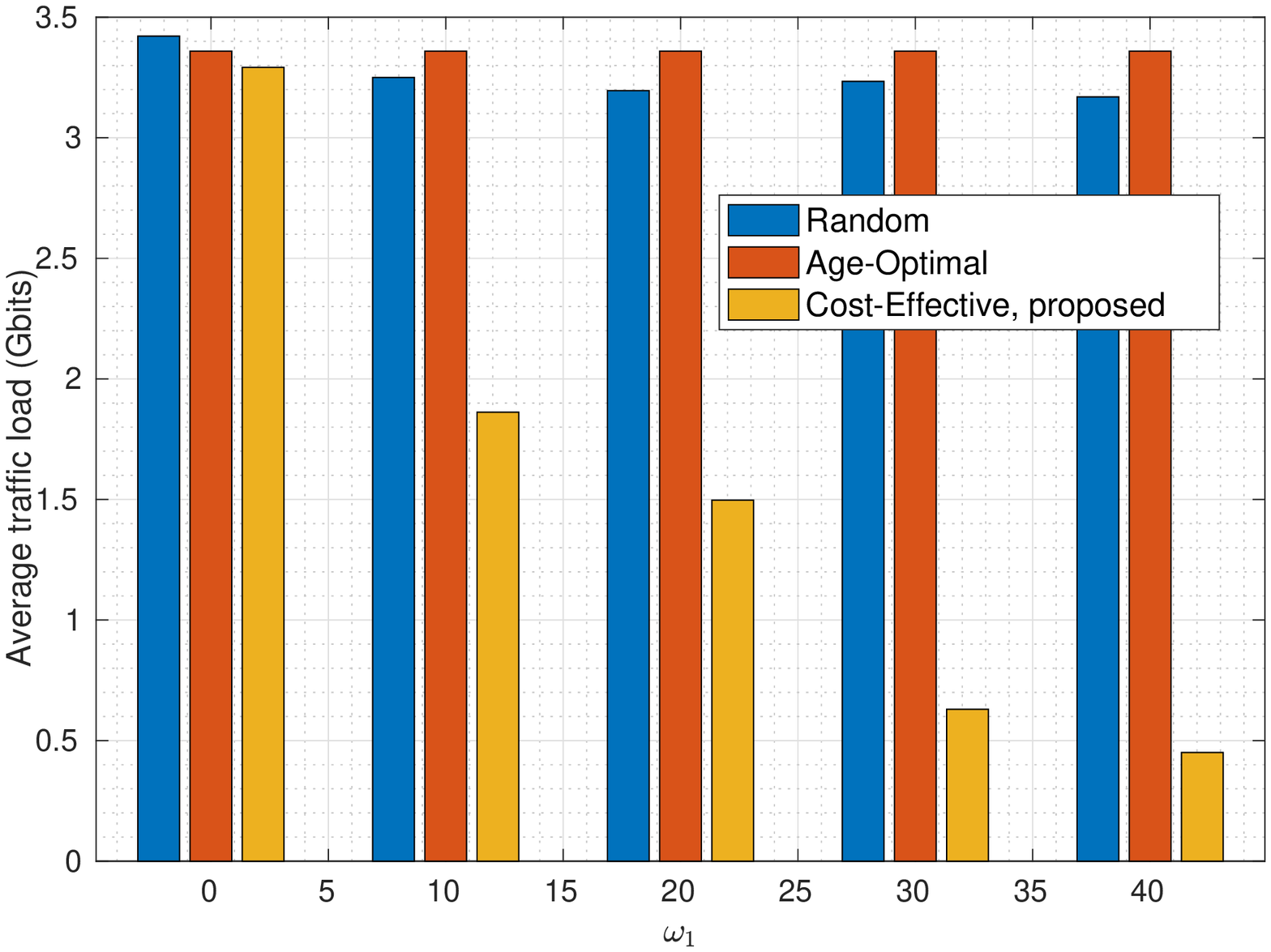}
  \caption{Average fronthaul traffic loads versus $\omega_1$. }
  \label{fig:eta1load}
\end{minipage}
\begin{minipage}[t]{.495\linewidth}
  \centering
  \includegraphics[scale=0.42]{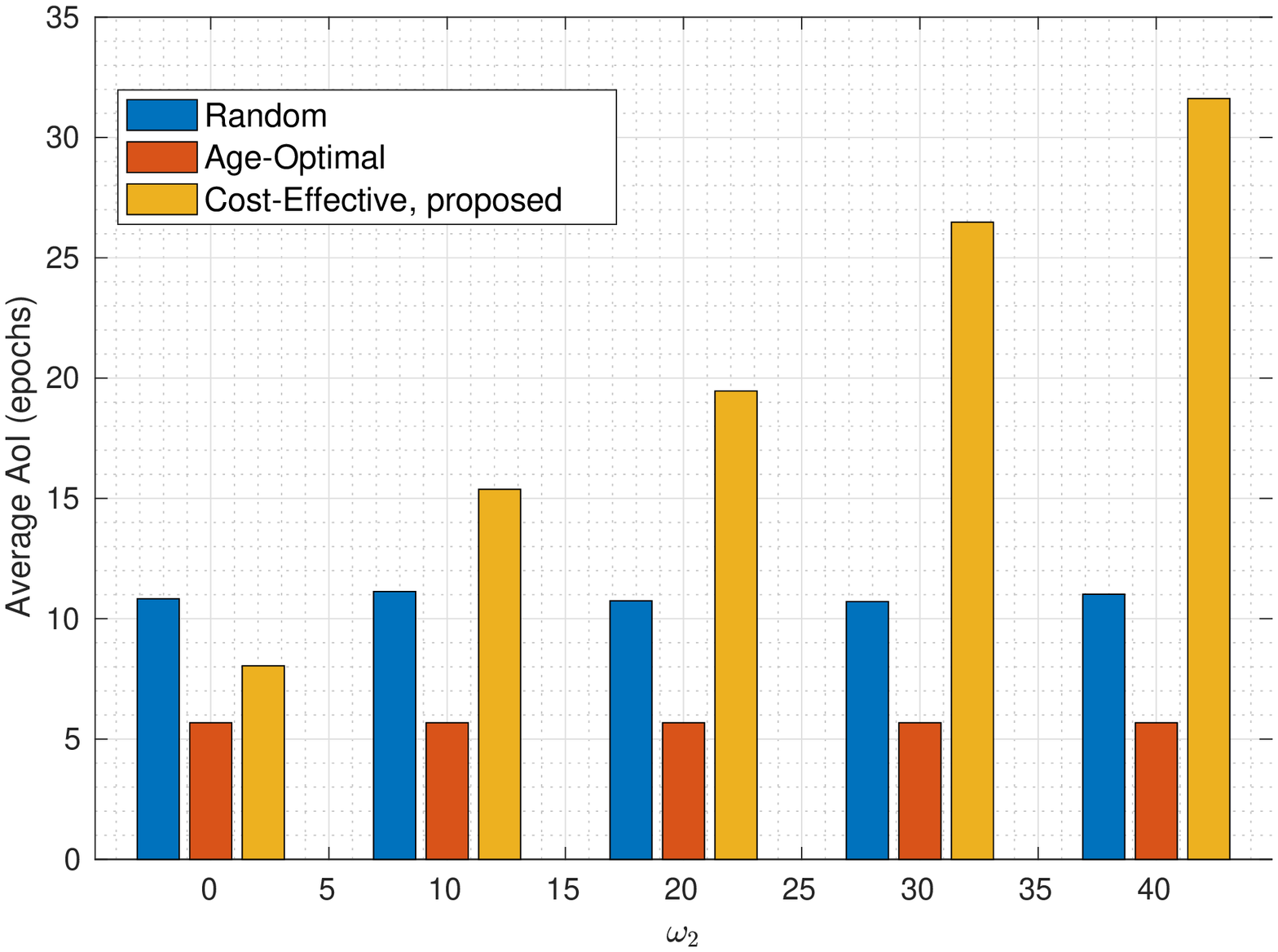}
  \caption{Average AoI versus $\omega_2$. }
  \label{fig:eta2aoi}
\end{minipage}
\end{figure}

\begin{figure}
\begin{minipage}[t]{.495\linewidth}
  \centering
  \includegraphics[scale=0.42]{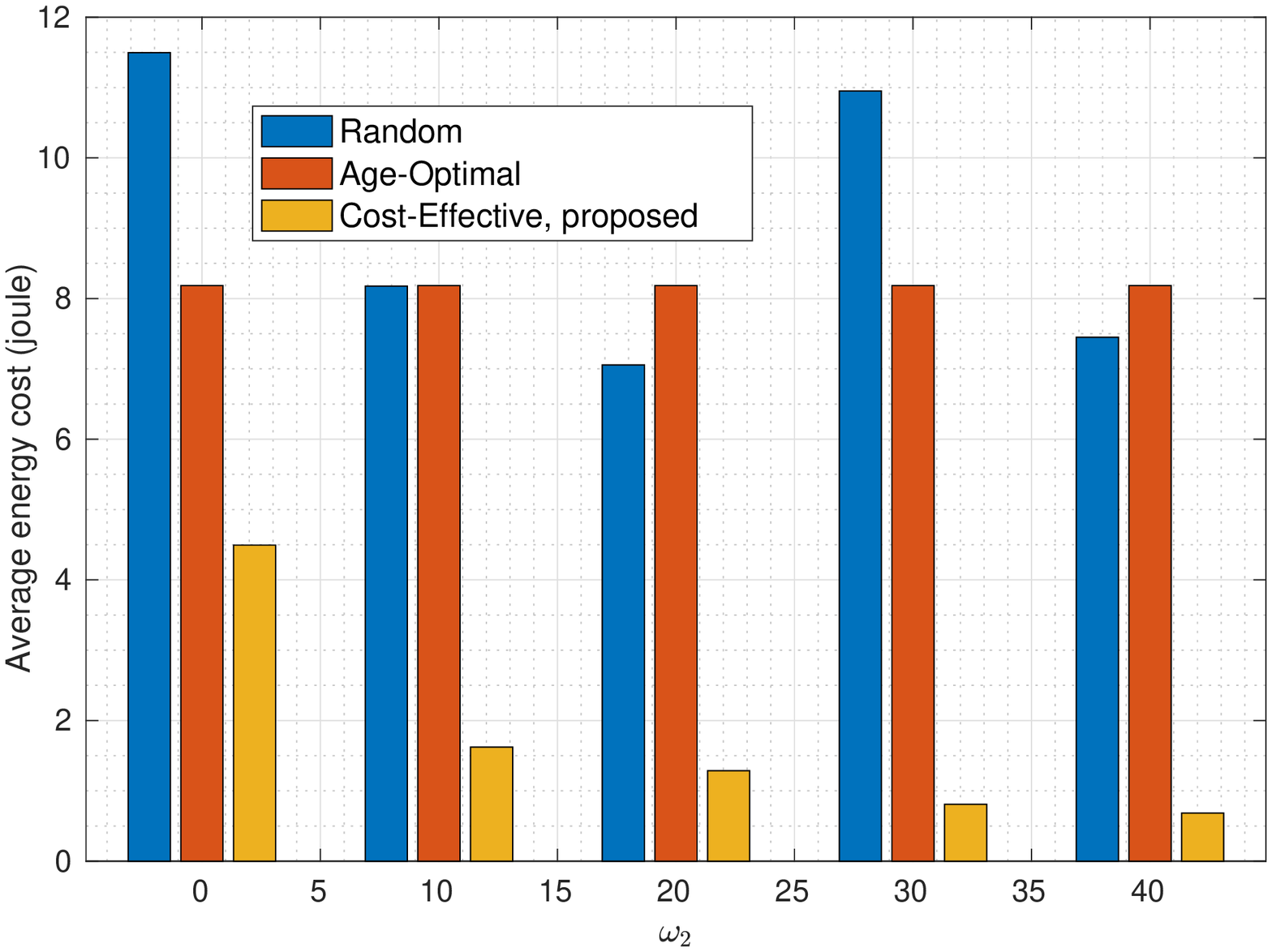}
  \caption{Average transmission energy consumption versus $\omega_2$. }
  \label{fig:eat2egy}
\end{minipage}
\begin{minipage}[t]{.495\linewidth}
  \centering
  \includegraphics[scale=0.42]{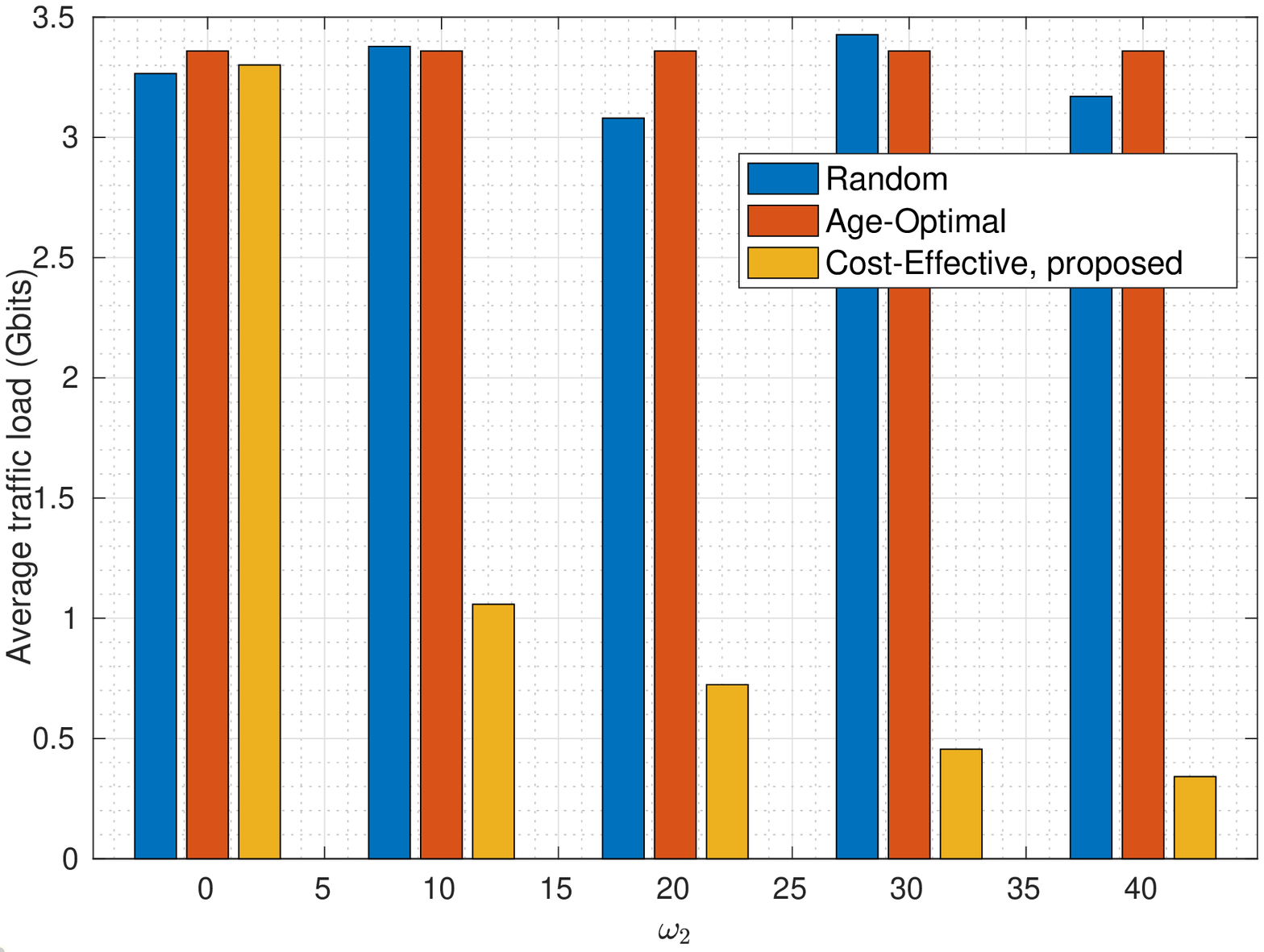}
  \caption{Average fronthaul traffic loads versus $\omega_2$. }
  \label{fig:eat2load}
\end{minipage}
\end{figure}

\subsection{Tradeoff among AoI \& Energy Consumption \& Traffic Loads}
In this subsection, we investigate the tradeoff among average AoI, transmission energy consumption and fronthaul traffic loads. 
To benchmark how energy consumption and traffic load consideration compromises the performance of data freshness, we consider two schemes as follows. i) {\it Age-Optimal Scheme:} We only optimize the average AoI without incorporating cache update costs; ii) {\it Random Scheme:} at each epoch, we randomly update one content item at each EN without being aware of the tradeoff among the considered performance criteria. The proposed scheme is referred to as {\it cost-effective scheme}.

We first carry out experiments by varying $\omega_1$ and illustrate the results of average AoI, transmission energy consumption, and fronthaul traffic loads in Fig. \ref{fig:eta1aoi} - \ref{fig:eta1load}, respectively. Besides, $\omega_2$ is fixed as the default value (i.e., 1). 
Evidently, when we enlarge the weight for transmission energy (i.e., $\omega_1$), the average AoI continuously grows high whereas transmission energy drops off quickly. This finding implies that the cost-effective scheme attempts to reduce the frequency of content update. For this reason, it leads to the reduction of traffic loads simultaneously, although the associated weight is fixed as a constant.  

We then conduct simulations by changing $\omega_2$ and fixing $\omega_1 = 1$. The results are depicted in Fig. \ref{fig:eta2aoi} - Fig. \ref{fig:eat2load}. It can be observed that, with the increment of $\omega_2$,  fronthaul traffic loads decrease gradually while the average AoI becomes increasingly large. Similarly, we conjecture that the update of the content becomes less frequent, which somehow results in more outdated content items cached at ENs. 
In addition, it should be noted that that fronthaul traffic loads, achieved by enlarging $\omega_2$ (shown in Fig. \ref{fig:eat2load}), decreases faster than that in Fig. \ref{fig:eta1load}.  For instance, when increasing $\omega_2$ up to 10, there is a 67.92\% reduction of traffic loads, which is much lager than 42.16\% achieved by enlarging $\omega_1$. This is because, when we enlarge $\omega_2$, the cost-effective scheme is likely to reduce the update frequencies of content items having large storage size. A similar conclusion can be drawn towards the degradation of energy consumption by tuning $\omega_1$. 
Finally, when we set $\omega_1$ or $\omega_2$ to be larger than 10, the reduction of energy consumption and fronthaul traffic loads are quite limited in comparison to age-optimal scheme or random scheme; it however leads to much larger AoI compared with baselines. Hence, $\omega_1$ and $\omega_2$ should be reasonably adjusted in order to balance the average AoI and update costs in practice.

\section{Conclusion}
In this paper, we have developed a multi-agent reinforcement learning framework for cache update in IoT sensing networks.
The objective of this framework is to minimize the weighted average age of information plus energy cost as well as fronthaul traffic loads. We have derived a characterization of energy consumption for content delivery. To cope with the discrete multi-agent decision-making, we have proposed a novel reinforcement learning approach with low space complexity. Simulation results have indicated that the proposed algorithms significantly outperform deep Q-network and traditional actor-critic approaches as the number of edge nodes or sensors increases; and the proposed decentralized caching scheme obtains satisfactory performance compared with the centralized one. The developed approach also has great potential to be applied in many other multi-agent discrete decision-making tasks. 

\appendix
\subsection{\textit{Proof of Proposition \ref{prop:egy}}}   
\label{appen:A}

To begin with, $\kappa_f^2$ follows an exponential distribution, i.e., $\frac{1}{2}\exp(-\frac{\kappa_f^2}{2})$. Therefore, the distribution of the received SNR $\eta_f$ is given by $\mathbb{P} (\eta_f) = \frac{1}{2\beta_f} \exp(-\frac{\eta_f}{2\beta_f})$. Recall that content transmissions are effective only when the received SNR exceeds $\eta_{th}$. As a result, the expected throughput $\bar R_f$ can be calculated as follows: 
\begin{align*}
	\bar R_f &= \int_{\eta_{th}}^{\infty} B \log_2(1+\eta_f) \mathbb{P}(\eta_f) ~d \eta_f\\
	& = \frac{B}{\log 2} \int_{\eta_{th}}^{\infty} \log(1+\eta_f) \frac{1}{2\beta_f} \exp\left(-\frac{\eta_f}{2\beta_f}\right) ~d\eta_f\\
	& = - \frac{B}{\log 2} \int_{\eta_{th}}^{\infty} \log (1+\eta_f) ~ d\exp\left(-\frac{\eta_f}{2\beta_f}\right)\\
	& = - \frac{B}{\log 2} \left[ -\log(1+\eta_f) \exp\left(- \frac{\eta_f}{2\beta_f}\right)\big|_{\eta_{th}}^{\infty}  + \int_{\eta_{th}}^{\infty} \exp{\left(-\frac{\eta_f}{2\beta_f}\right)} \frac{1}{1+\eta_f} ~d\eta_f \right]\\
	& = R_{th} \exp \left(-\frac{\eta_f}{2\beta_f}\right) + \frac{B}{\log 2} \int_{\eta_{th} + 1}^{\infty} \exp\left( \frac{1 - \eta_f } {2\beta_f}\right) \frac{1}{\eta_f} ~ d \eta_f\\
	& = R_{th} \exp{\left(- \frac{\eta_f}{2\beta_f}\right)} + \frac{B}{\log 2} \exp \left({\frac{1}{2\beta_f}}\right) \rho_f (\eta_{th} +1),
\end{align*}
where function $\rho_f(\cdot)$ is defined by \eqref{eq:rpo}. Thus, given the content size $s_f$ and transmission power $P_f$, the average energy consumption can be given by $\bar{E}_f = P_fs_f/\bar{R}_f$. This completes the proof. 

\subsection{\textit{Proof of Lemma  \ref{lemma:gs}}}
\label{appen:B}

For notational convenience, we denote $ \alpha_i = \mu_{\bth}(i|\bm s)$, and $ \varpi_i = g_{i,b} + \log  \alpha_i,  \forall i \in \mc A_b, b \in \mc B$. Then, it follows that $\hat{\bm a}_b = \arg\max_{i \in \mc A_b}  \varpi_i $. We calculate the following probability:
\begin{align*}
	\pr \{ \hat{\bm a}_b = i |\bm s\} & =  \pr \{ \varpi_i \geq \varpi_j, \forall j \neq i\} \\
	& = \int_{-\infty}^{\infty}  \Pi_{j \neq i} \{\varpi_i \geq \varpi_j| \varpi_i\}  \pr\{\varpi_i \} d\varpi_i \\
    & =\int_{-\infty}^{\infty}  \Pi_{j \neq i}  \exp \left({-\exp \left({-\varpi_i + \log \alpha_j}\right)}\right)  \exp \left({- ( \varpi_i - \log \alpha_j + \exp \left({-(\varpi_i  - \log (\alpha_i))}\right)  )}\right)   d\varpi_i\\
    & = \int_{-\infty}^{\infty} \exp \left({-\sum_{j \neq i} \alpha_j \exp({-\varpi_i}) } \right) \alpha_i \exp \left({- (\varpi_i + \alpha_i \exp ({-\varpi_i})}\right) d \varpi_i\\
    & \mathop{=}\limits^{(a)} \int_{-\infty}^{\infty} \alpha_i  \exp \left({- \varpi_i - \exp({-\varpi_i})}\right) d \varpi_i\\
    & =  \alpha_i
\end{align*}
% \begin{align*}
% 	\pr \{ \hat{\bm a}_b = i \} & =  \pr \{ \varpi_i \geq \varpi_j, \forall j \neq i\} \\
% 	& = \int_{-\infty}^{\infty}  \Pi_{j \neq i} \{\varpi_i \geq \varpi_j| \varpi_i\}  \pr\{\varpi_i \} d\varpi_i \\
%     & =\int_{-\infty}^{\infty}  \Pi_{j \neq i}  e^{-e^{-\varpi_i + \log \alpha_j} } e^{- ( \varpi_i - \log \alpha_j + e ^{-(\varpi_i  - \log (\alpha_i))}  )}   d\varpi_i\\
%     & = \int_{-\infty}^{\infty} e^{-\sum_{j \neq i} \alpha_j e^{-\varpi_i} } \alpha_i e^{- (\varpi_i + \alpha_i e^{-\varpi_i})} d \varpi_i\\
%     & = \alpha_i  \int_{-\infty}^{\infty} e^{- \varpi_i - e^{-\varpi_i}} d \varpi_i\\
%     & =  \alpha_i
% \end{align*}
for $\forall i \in \mc A_b, b \in \mc B$, where step $(a)$ is as a result of $\sum_{i \in \mc A_b} \mu_{\bth} (i|\bm s) = 1$. Then, we conclude that Pr$\{\hat{\bm a}|\bm s\} = \Pi_{b \in \mc B} ~ \text{Pr} \{\hat{\bm a}_b|\bm s\} = \Pi_{b \in \mc B} ~\mu_{\bth} (\hat{\bm a}_b|\bm s)$, which completes the proof. 

\bibliographystyle{IEEEtran}
\bibliography{references}
\end{document}